\pgfplotsset{compat=1.16}
\def\E#1{\mathbb{E}\left[#1\right]}
\def\V#1{\mathbb{V}\left[#1\right]}
\def\P#1{\mathbb{P}\left[#1\right]}
\newcommand{\argmin}{\mathop{\rm arg\,min}\limits}
\newcommand{\esssup}{\mathop{\rm ess\,sup}\limits}
\numberwithin{equation}{section}
\newenvironment{theorem}
  {\pushQED{\qed}\theoremx}
  {\popQED\endtheoremx}
\newtheorem{assumption}{Assumption}
\newtheorem{assL}{Assumption}
\newtheorem{assF}{Assumption}
\newenvironment{corollary}
  {\pushQED{\qed}\corx}
  {\popQED\endcorx}
\newenvironment{lemma}
  {\pushQED{\qed}\lemmax}
  {\popQED\endlemmax}
\theoremstyle{definition}
 \newenvironment{remark}
  {\pushQED{\qed}\remarkx}
  {\popQED\endremarkx}
\renewenvironment{proof}[1][\proofname]{%
  \par\pushQED{\qed}\normalfont%
  \topsep6\p@\@plus6\p@\relax
  \trivlist\item[\hskip\labelsep\bfseries#1\@addpunct{.}]%
  \ignorespaces
}{%
  \popQED\endtrivlist\@endpefalse
}
\renewcommand*{\@fnsymbol}[1]{\ensuremath{\ifcase#1\or \flat\or * \else\@ctrerr\fi}}
\title{\bf Kernel Choice Matters for Local Polynomial Density Estimators at Boundaries}
\author{Shunsuke Imai\thanks{\href{mailto:imai.shunsuke.57n@st.kyoto-u.ac.jp}{imai.shunsuke.57n@st.kyoto-u.ac.jp}}}
\author{Yuta Okamoto\thanks{
\href{mailto:okamoto.yuuta.57w@st.kyoto-u.ac.jp}{okamoto.yuuta.57w@st.kyoto-u.ac.jp} (corresponding author)}}
\affil{Graduate School of Economics, Kyoto University}
\begin{document}
\maketitle
\begin{abstract}
    Local polynomial density (LPD) estimators are widely used for inference on boundary features of the density function. Contrary to conventional wisdom, we show that kernel choice substantially affects efficiency. Theory, simulations, and empirical evidence indicate that the popular triangular kernel delivers large mean squared error, wide confidence intervals, and limited power for detecting discontinuities. Moreover, small-sample variance can explode because the finite-sample variance is infinite under compactly supported kernels. As a simple yet powerful remedy, we recommend using the Gaussian or Laplace kernels. These alternatives yield marked efficiency gains and eliminate variance explosions, improving the reliability of LPD-based inference.
\end{abstract}

{\textbf{Keywords:} Asymptotic efficiency, finite sample theory, kernel selection, local polynomial fitting, manipulation tests, statistical power}



\newpage
\doublespacing
\section{Introduction}\label{sec: introduction}
The probability density function at boundary points is often of interest in empirical analyses across economics, political science, and applied statistics.
For instance, certain empirical tests of economic theory rely on density at boundary points (e.g., \citealp{Collin_Talbot:2023}), and discontinuity detection has been used to identify social norms (\citealp{Bertrand_etal:2015}), study a tax-evasion behavior of taxpayers (\citealp{Breunig_etal:2024}), detect $p$-hacking (\citealp{Elliott_etal:2022}), and test the score manipulation in regression discontinuity (RD) analysis (see \citealp{Cattaneo_Titiunik:2022, Cattaneo_etal:2023med}).

For such purposes, where the value of the density at the boundary is of interest, one of the most widely used approaches is the local polynomial density (LPD) estimator proposed by \cite{Cattaneo_etal:2020}.
The LPD estimator is based on the local polynomial kernel smoothing and, as the authors note, it ``enjoys all the desirable features associated with local polynomial regression estimation," including boundary adaptation.
Due to this superior bias property at boundary points, the LPD estimator has been widely adopted in empirical economics (e.g., \citealp{Britto_etal:ECTA, Chen_etal:JDevEcon, Connolly_Haeck:JLaborEcon, Dasgupta_etal:REStat, Gorrin_etal:2023JIE, He_etal:2020, Khanna:JPE}, among others).

However, despite its popularity, our numerical simulations reveal several variance-related issues with the LPD at the boundary.
First, under the commonly used triangular kernel, confidence intervals (CIs) are often wide near the boundary, and discontinuity tests can be so underpowered that they frequently fail to detect discontinuities, even in large samples.
While this issue is not explicitly discussed in the literature, similar patterns of wide CIs appear in several empirical studies, e.g., \citet[Figure 5]{Breunig_etal:2024}, \citet[Figure 1]{DeBenedetto_etal:2025}, \citet[Figure 5]{Forderer_Burtch:2025}, and \citet[Figure 3]{Keefer_Vlaicu:2025}. 
This feature may limit our ability to draw reliable conclusions about the boundary behavior of the density function or to detect economically important empirical observations, such as bunching.

Furthermore, in addition to this large-sample variance issue, our simulation also indicates that the variance of the LPD estimator under the triangular kernel effectively \textit{diverges} when the sample size is small. This finite-sample variance property further raises concerns about the credibility of boundary estimation and inference based on the LPD estimator.

Interestingly, our simulations suggest that these problematic properties of the LPD estimator are substantially mitigated by using Gaussian or Laplace kernels, although these choices are uncommon in the local polynomial smoothing literature.
In particular, relative to the commonly used triangular kernel, switching to these less common kernels reduces the mean squared error (MSE), shortens confidence intervals, increases statistical power in discontinuity detection, and suppresses the finite-sample variance explosion.
Our first contribution is to document this practically important observation for applied work.

Motivated by these numerical observations, this article theoretically studies and highlights a close connection between kernel choice and these properties of large- and small-sample variance.
First, in sharp contrast to conventional wisdom in kernel estimation literature, we argue that kernel selection has a nontrivial effect on the LPD’s asymptotic efficiency.
For example, the commonly used triangular kernel is 14\% less efficient compared to our preferred Laplace kernel in terms of MSE, a sizable gap given that replacing the optimal Epanechnikov kernel with, for example, a Gaussian kernel reduces efficiency by only about 4\% in the standard setting (\citealp[p.~341]{Hansen:2022_prob}). 
Moreover, for inference, the efficiency loss incurred by using the triangular kernel instead of the Laplace amounts to about 50\%. 
Furthermore, in contrast to the standard local polynomial regression literature \citep{calonico2022coverage}, the uniform kernel is an even \textit{worse} option; its efficiency loss amounts to approximately 75\%.
Intuitively, these values imply that achieving the same interval length as with the Laplace kernel requires approximately 1.6-1.9 times the sample size.
Building on this analysis, we further study the statistical power of the LPD-based discontinuity test against a $\sqrt{nh}$ local alternative.
In contrast to the fixed alternative case studied by \cite{Cattaneo_etal:2020}, the kernel choice has a first-order contribution to the asymptotic power property, and we find that the Gaussian and Laplace kernels can improve the power over the commonly employed kernel functions.

Why does the performance of the LPD estimator depend so strongly on the choice of kernel?
To build intuition for these efficiency gains, we study the equivalent kernels of the LPD estimator. By deriving the infeasible optimal weighting function and comparing it with the equivalent kernels, we find that popular kernels deviate substantially from the optimum, thereby implying sizable efficiency gains from employing alternative kernel functions.
Moreover, this analysis indicates that the gains are rooted in the boundary-kernel methods of \cite{Gasser_etal:1985} and \cite{Muller:1991}, although the numerical importance of kernel choice has been largely overlooked in the literature. Taken together, our results provide new insights for the standard kernel-smoothing literature.

Regarding the small-sample variance inflation issue, we formally show that the LPD estimator inherits not only the ``desirable features" of local polynomial smoothing techniques but also an \textit{undesirable} finite-sample property: when a compactly supported kernel is used, the estimator has no finite variance \citep{Seifert_Gasser:1996}. 
In particular, we show that the finite-sample variance is infinite when using compactly supported kernels such as the triangular and uniform kernels, whereas it is bounded when employing an unbounded support kernel such as the Laplace kernel. 

This small-sample variance property is particularly consequential in score manipulation tests in RD designs, where one-sided manipulation always reduces the effective sample size on one side of the cutoff, potentially making the finite-sample variance problem dominant. 
In this view, the use of an unbounded support kernel is particularly recommended in manipulation testing settings---one of the most widespread applications of LPD estimators.

Taken all together, our results show that \textit{kernel choice matters} both asymptotically and in finite samples.
Although it is commonly believed that, for kernel-based estimators, ``the choice of kernel function $K$ is not very important for the performance of the resulting estimators, both theoretically and empirically" (\citealp[p.~76]{Fan_Gijbels:1996}), the findings in this paper suggest that this (generally valid) understanding does not apply to the boundary estimation and inference using the LPD estimators.
Careful selection of the kernel function can substantially enhance the performance of LPD estimation and inference.
This simple yet powerful modification yields significant efficiency gains: relative to commonly used kernels, the improvements are considerable both theoretically and empirically.

\paragraph{Plan of the Article.}
This paper proceeds as follows. 
Section \ref{sec: motivation} presents numerical and empirical examples that motivate our analysis. 
Section \ref{sec: efficiency} studies the asymptotic efficiency of the LPD estimator in both estimation and inference settings, and then examines statistical power under local alternatives. We also provide intuition for the efficiency gains from kernel choice through an equivalent-kernel analysis. 
Section \ref{sec: finite} investigates the finite-sample variance properties of the LPD estimator. 
Section \ref{sec: recommendation} summarizes practical recommendations for empirical researchers. 
Section \ref{sec: conclusion} concludes. 
All proofs, together with additional discussion of interior points, are provided in the Online Appendix.
The links to the kernel estimation literature are made explicit in Sections \ref{subsec: asy efficiency} and \ref{subsec: why kernel choice matters}, and hence, to conserve space, we refrain from providing a separate literature review.

\section{Motivating Numerical Evidence}\label{sec: motivation}
Kernel choice is often regarded as less important.
To illustrate the potential importance of kernel selection in the LPD estimation and inference, we begin by presenting a piece of numerical evidence. 
\subsection{Discontinuity Detection}\label{subsec: simu discontinuity}
\subsubsection{Simulation Evidence}
We begin with one of the most relevant scenarios, discontinuity detection (or manipulation testing) in RD designs.
We consider the following data-generating process, which is designed to mimic a one-sided score manipulation.
\begin{itemize}
    \item[1.] Randomly draw $n$ observations from a $\mathcal{N}(0,1)$ distribution truncated below at $0$.
    \item[2.] Randomly select $q\times 100\%$ of the observations that fall in $(c, 0.9)$, and replace them with $0.9 + u_i$, where $u_i \sim \mathrm{Uniform}(0, 0.2)$.
\end{itemize}
That is, we consider a scenario where some observations that would otherwise fall in $(c, 0.9)$ are shifted to $[0.9, 0.9 + 0.2]$, with $0.9$ being the cutoff.
We examine the following three cases: (1) $q = 0.5$ and $c = 0.8$ (2) $q = 0.5$ and $c = 0.7$, and (3) the no-manipulation case, i.e., $q=0$.
A smaller value of $c$ (i.e., case (2)) implies that the manipulation occurs over a wider region. For cases (1) and (2), see Figure \ref{fig:hist} for example histograms.
\begin{figure}[t]
    \begin{center}
    \begin{tabular}{cc}
          \begin{minipage}[t]{0.45\hsize}
            \centering
            \includegraphics[keepaspectratio, scale=0.6]{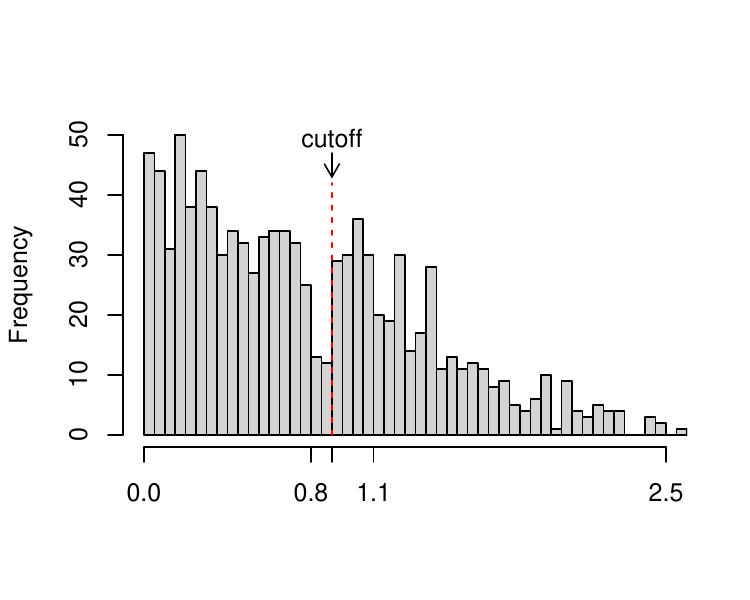}
            \subcaption{Case (1)}
          \end{minipage} &
          \begin{minipage}[t]{0.45\hsize}
            \centering
            \includegraphics[keepaspectratio, scale=0.6]{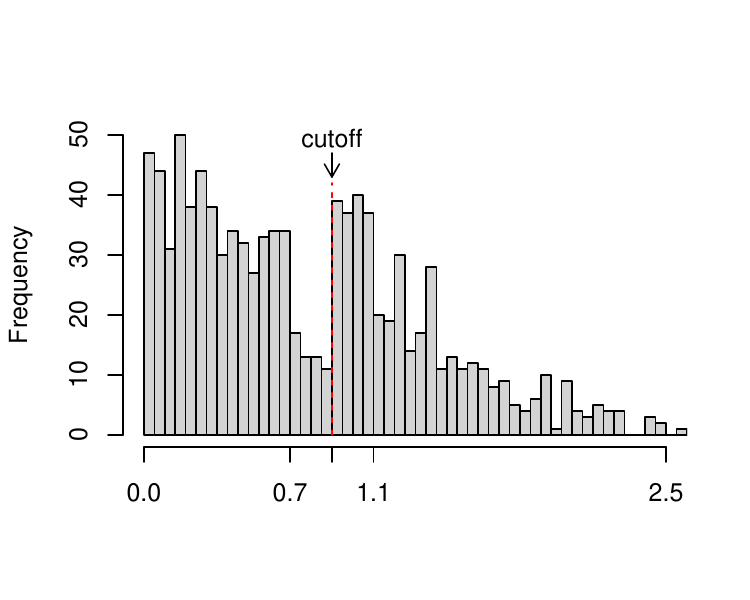}
            \subcaption{Case (2)}
          \end{minipage} 
        \end{tabular}
    \end{center}
    \caption{Histogram Examples from the Simulation}
    \label{fig:hist}
\end{figure}

We perform the discontinuity test by following the procedure proposed in \cite{Cattaneo_etal:2020}, details of which will be introduced in a later section.
The sample sizes are set to $n = 1000$, $750$, and $500$.
Each experiment is repeated 2000 times.
The nominal significance level is set to $0.05$.
\begin{table}[t]
    \begin{center}
    \begin{tabular}{l|ccc|ccc|ccc}
    \hline\hline
         & \multicolumn{3}{c|}{$n=1000$} & \multicolumn{3}{c|}{$n=750$} & \multicolumn{3}{c}{$n=500$} \\
        Kernel & Tri & Gau & Lap & Tri & Gau & Lap & Tri & Gau & Lap\\ \hline
        (1) & 75.95 & 87.75 & 96.65 & 64.65 & 77.65 & 89.35 & 47.40 & 56.45 & 71.40\\
        (2) & 53.00 & 77.50 & 94.25 & 44.95 & 72.45 & 92.85 & 37.60 & 65.50 & 87.15\\
        (3) & 4.35 & 4.70 & 5.25 & 4.20 & 4.45 & 4.60 & 3.80 & 4.55 & 4.95\\\hline
    \end{tabular}
    \caption{Rejection Probabilities of the Null of Density Continuity (\%)}
    \label{tab: rej probs}
    \end{center}

    \footnotesize
    \renewcommand{\baselineskip}{11pt}
    \textbf{Note:} We use the \texttt{rddensity} package with default options for the triangular kernel.
    We add manually the Gaussian and Laplace kernels as they are not originally included in the package.
    For the Gaussian and Laplace kernels, the options controlling the minimum number of observations within the bandwidth (\texttt{nLocalMin} and \texttt{nUniqueMin}) are set to zero, since these kernels have an unbounded support and therefore the size of the sample falling within the bandwidth is not informative.
\end{table}

The simulation results are reported in Table \ref{tab: rej probs}.
We find that the empirical power significantly differs depending on the kernel choice.
In every scenario, the Gaussian and Laplace kernels show higher rejection probabilities than the triangular kernel, the most popular choice in empirical studies.
Moreover, the difference in power is numerically significant. In several cases, the power under the Laplace kernel is more than doubled compared to the triangular kernel.
Even with a relatively small sample size, the Laplace and Gaussian kernels continue to yield a satisfactory rejection rate, whereas the triangular kernel exhibits a marked deterioration in performance.
Furthermore, in case (3) where no manipulation occurs, the rejection probability is less sensitive to the kernel selection and it is close to the nominal level in every case.
These simulation results suggest that the kernel choice is consequential in discontinuity detection using the LPD estimator.

\subsubsection{Real Data Example}\label{subsec: simu empirical}
In the previous simulation, we observed that statistical power varies with the kernel choice.
This result suggests that the efficiency property of the LPD estimator may crucially depend on the kernel selection.
We here provide an empirical example supporting this conjecture.

Our first application adopts data from \citet[EEL]{Eggers_etal:2021JME}.
This study employs RD designs and examines the presence of score manipulation using the LPD-based discontinuity test. 
Figure \ref{fig:empirical_a} replicates \citet[Figure 2]{Eggers_etal:2021JME}.
We can see that the CIs around the cutoff point are excessively wide, particularly on the right-hand side.
This raises concerns about whether a ``no-discontinuity" conclusion can be drawn with confidence, even if the test does not reject the null hypothesis of continuity. 
By contrast, the Gaussian and Laplace kernels (Figures \ref{fig:empirical_b}-\ref{fig:empirical_c}) show much tighter CIs. For example, the CIs under the Laplace kernel are $16\%$ shorter on the left and $41\%$ shorter on the right.
Although the CIs are much tighter, they still exhibit overlap around the cutoff point (and the formal test does not reject the continuity). This persistent overlap, despite the narrower intervals, suggests that manipulation is unlikely and thereby strengthens our confidence in the validity of the main RD study.

\begin{figure}[t]
  \begin{center}
    \begin{tabular}{ccc}
      \begin{minipage}[t]{0.32\hsize}
        \centering
        \includegraphics[keepaspectratio, scale=0.37]{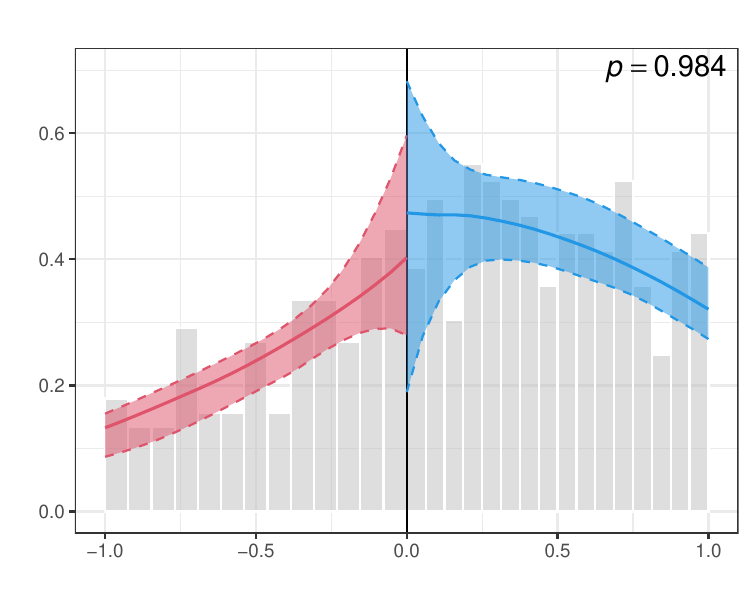}
        \subcaption{EEL: Triangular Kernel}
        \label{fig:empirical_a}
      \end{minipage} &
      \begin{minipage}[t]{0.32\hsize}
        \centering
        \includegraphics[keepaspectratio, scale=0.37]{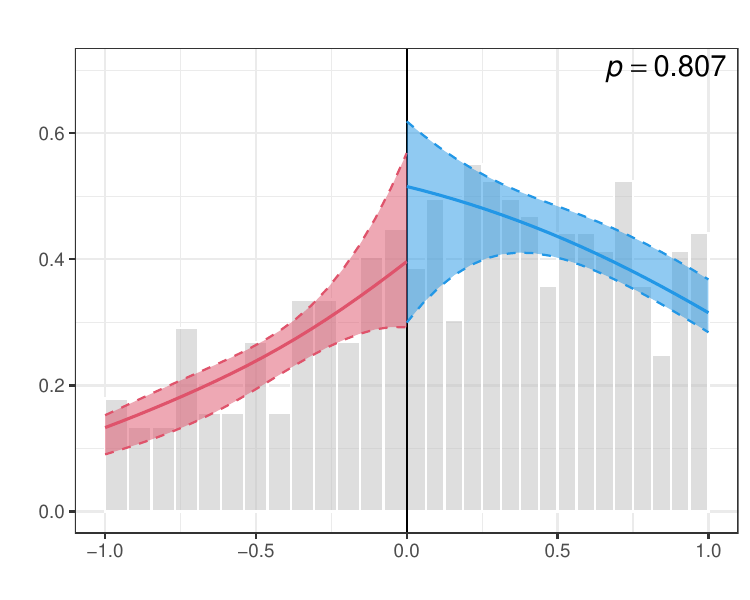}
        \subcaption{EEL: Gaussian Kernel}
        \label{fig:empirical_b}
      \end{minipage} &
      \begin{minipage}[t]{0.32\hsize}
        \centering
        \includegraphics[keepaspectratio, scale=0.37]{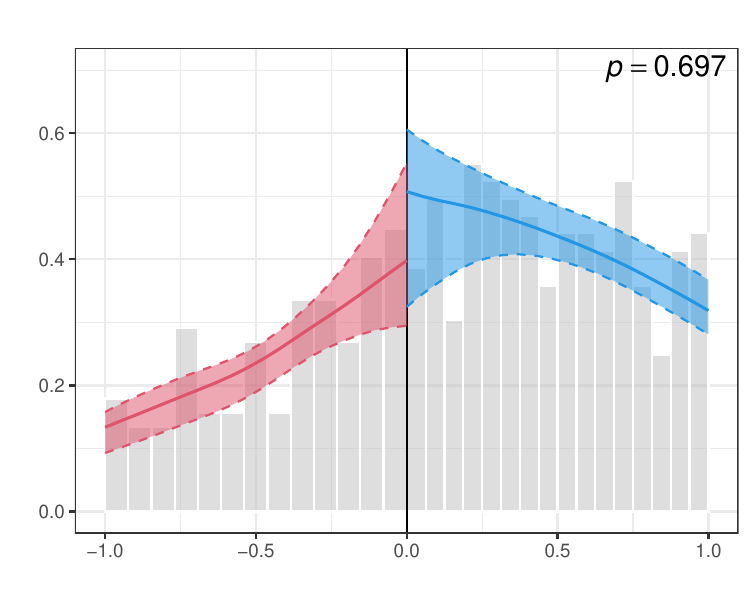}
        \subcaption{EEL: Laplace Kernel}
        \label{fig:empirical_c}
      \end{minipage} \\[1ex]

      \begin{minipage}[t]{0.32\hsize}
        \centering
        \includegraphics[keepaspectratio, scale=0.37]{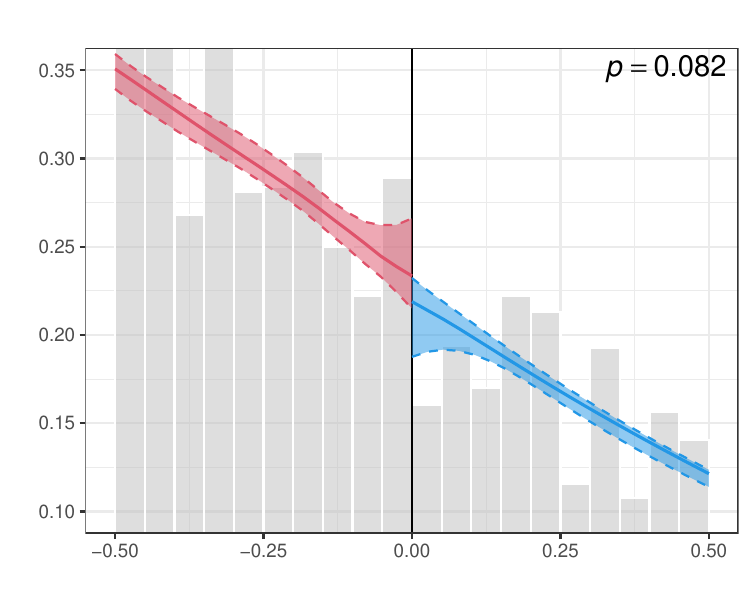}
        \subcaption{LSO: Triangular Kernel}
        \label{fig:empirical_d}
      \end{minipage} &
      \begin{minipage}[t]{0.32\hsize}
        \centering
        \includegraphics[keepaspectratio, scale=0.37]{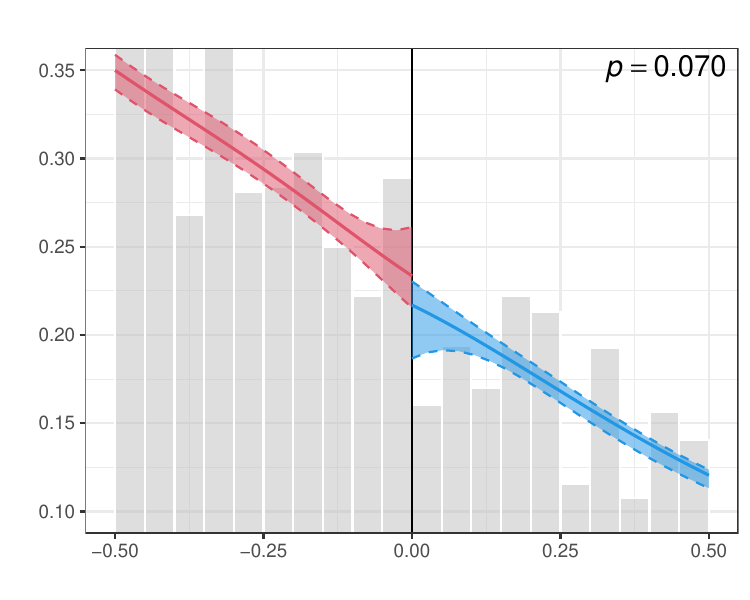}
        \subcaption{LSO: Gaussian Kernel}
        \label{fig:empirical_e}
      \end{minipage} &
      \begin{minipage}[t]{0.32\hsize}
        \centering
        \includegraphics[keepaspectratio, scale=0.37]{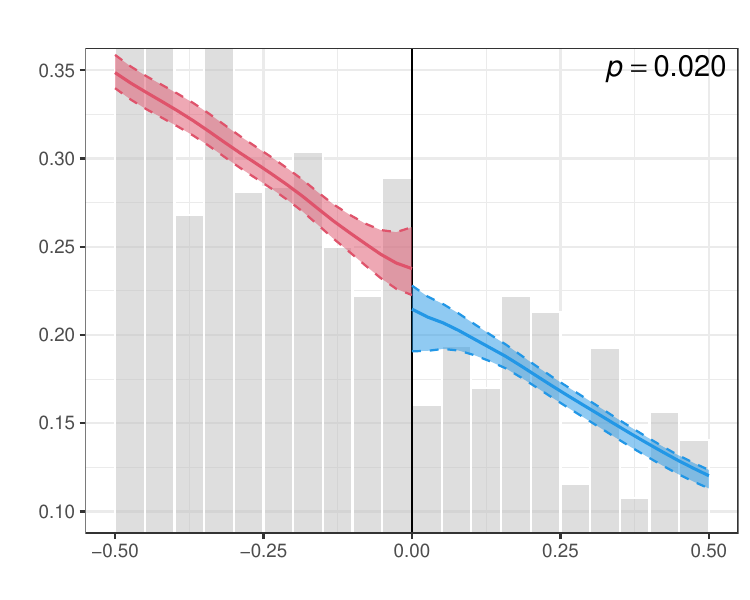}
        \subcaption{LSO: Laplace Kernel}
        \label{fig:empirical_f}
      \end{minipage}
    \end{tabular}
  \end{center}
  \caption{Empirical Examples: Manipulation Testing}
  \label{fig:empirical}

  \footnotesize
  \renewcommand{\baselineskip}{11pt}
  \textbf{Note:} The LPD estimates (solid lines) and 95\% pointwise confidence intervals (shaded areas) are shown. The sample sizes are $n=580$ in the upper panel and $n=40582$ in the lower panel.
\end{figure}

As an additional empirical example, we use data from \citet[LSO]{Lindo_etal:2010}, which is also an RD study.
Figure \ref{fig:empirical_d} reports the density estimates together with CIs, where the CIs exhibit moderate overlap. Consistent with this, the discontinuity test yields a $p$-value of $0.082$, so we fail to reject continuity at the conventional nominal level. 
Under the Gaussian kernel, the conclusion is unchanged, while the CIs become slightly shorter.
Using the Laplace kernel, the CI length decreases further (by $24\%$ on the left and $16\%$ on the right compared to the triangular kernel) with limited overlap. 
Consistently, the discontinuity test rejects continuity, with a $p$-value of $0.02$. This discontinuity detected under the Laplace kernel is consistent with the binomial test performed in \citet[Chapter 4.4]{cattaneo2023practical}, which is based on the local randomization framework proposed in \cite{Cattaneo_etal:2015}.

\subsection{Density Estimation}\label{subsec: simu density}
To see the variance property more directly, we next consider the standard density estimation setting at a boundary point. 
The data-generating process is the standard normal distribution $\mathcal{N}(0,1)$ truncated below at $-0.8$, which is adopted from \citet[Supplemental Appendix]{Cattaneo_etal:2020}. 
We are interested in $f(-0.8)$.
We perform estimation and inference using the \texttt{lpdensity} package with default options.

\begin{table}[!t]
    \begin{center}
    \begin{tabular}{l ccc c ccc}
        \hline\hline
         & \multicolumn{3}{c}{Estimated Bandwidth} && \multicolumn{3}{c}{Theoretical Bandwidth} \\
        \cline{2-4} \cline{6-8}
        Kernel & Tri & Gau & Lap && Tri & Gau & Lap \\
        \hline
        \multicolumn{6}{l}{($n=1000$)}\\
        Coverage & 0.960 & 0.950  & 0.946  && 0.947 & 0.947 & 0.947  \\
        Length    & 0.385 & 0.314  &  0.286 && 0.331 & 0.301  & 0.267 \\
        MSE ($\times10^2$)   &  0.475 &  0.473 & 0.386  &&  0.299 & 0.280 & 0.257
        \\
        \hline
        \multicolumn{6}{l}{($n=750$)}\\
        Coverage & 0.960 & 0.946  & 0.946  && 0.951 &  0.948 & 0.948  \\
        Length    & 0.440 & 0.362 &  0.330 && 0.376 & 0.341  & 0.302 \\
        MSE ($\times10^2$)    &  0.622 & 0.558 & 0.453 &&  0.382 & 0.355 & 0.325\\
        \hline
        \multicolumn{6}{l}{($n=500$)}\\
        Coverage & 0.955 &  0.940  & 0.941  && 0.950 &  0.951  & 0.945  \\
        Length    & 0.533 & 0.442  &  0.404 && 0.449 &  0.407 & 0.362  \\
        MSE ($\times10^2$)    & 0.844 &  0.728 & 0.636  && 0.559 & 0.520  & 0.479\\
        \hline
    \end{tabular}
    \caption{Coverage Probability, Average Length, and MSE}
    \label{tab: intro simulation}
    \end{center}

    \footnotesize
    \renewcommand{\baselineskip}{11pt}
    \textbf{Note:} The empirical coverage probability, average length of the pointwise CI, and MSE at the left boundary after $2000$ iterations are shown. 
    On the left panel, the estimated bandwidths using the \texttt{lpdensity} package are used. On the right panel, the theoretical asymptotically optimal bandwidths are used (see Section \ref{sec: inference} for further details on the optimal bandwidth). 
\end{table}
Table \ref{tab: intro simulation} reports the empirical coverage probabilities, average CI lengths, and MSEs based on 2000 iterations.
Although the coverage probabilities are nearly identical across kernels and close to the nominal level, the average CI length with the Laplace kernel is substantially shorter than with the triangular kernel.
In every case, the Laplace kernel yields CIs about 25\% shorter than those of the triangular kernel. The Gaussian kernel also shows similar improvement, although modest compared to the Laplace.
The MSEs follow a similar pattern: The Gaussian and Laplace kernels attain an MSE approximately 10-25\% lower than that of the triangular kernel.

To eliminate the influence of the variation in the selected bandwidth, Table \ref{tab: intro simulation} also reports the performance based on the (infeasible) asymptotically optimal bandwidths at every iteration. Once again, we observe the same pattern. The average length and MSE under the Laplace or Gaussian kernels are much shorter/smaller than under the triangular kernel.

\subsection{Small Sample Performance}\label{subsec: simu small}
Finally, we provide simulation evidence on the small-sample behavior of the LPD estimator.
We again use the truncated normal distribution as in Section \ref{subsec: simu density}, but now consider cases where the sample size is much smaller ($n\in \{400, 350, \ldots, 100, 75, 50\}$).
The empirical relevance of such small-sample settings will be discussed in Section \ref{sec finite manip}; here, we simply report and interpret the simulation results.
\begin{figure}[t]
    \begin{center}
        \includegraphics[width=0.5\linewidth]{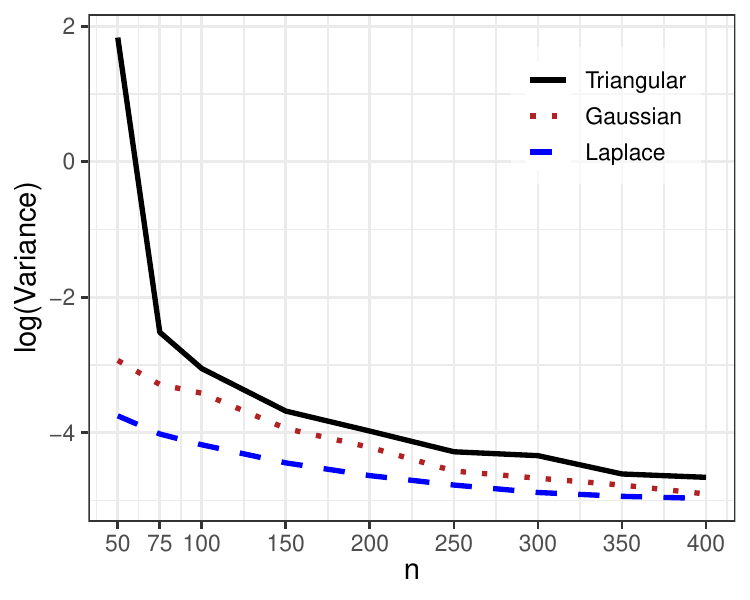}
    \caption{Small-Sample Variance}
     \label{fig: finite variance}
    \end{center}

     \footnotesize
    \renewcommand{\baselineskip}{11pt}
    \textbf{Note:} The variance is reported on the logarithmic scale. 
    The bandwidths are estimated using the \texttt{lpdensity} package. To avoid stabilizing the estimation in an artificial way that could obscure the underlying problem, the options controlling the minimum number of observations within the bandwidth (\texttt{nLocalMin} and \texttt{nUniqueMin}) are set to zero. 
    As a result, when using the triangular kernel, the LPD estimator occasionally fails to produce estimates when the sample size is small (1 time out of 1000 iterations when $n=100$ and $n=75$, and 20 times when $n=50$).
    The variance is computed after excluding such cases.
    In contrast, the Gaussian and Laplace kernel successfully produces estimates in every case.
\end{figure}

Figure \ref{fig: finite variance} reports the variance (in the logarithmic scale) of the LPD estimator based on $1000$ iterations for each sample size.
We find that the finite-sample variance exhibits a very different pattern depending on the kernel function used.
When the popular triangular kernel is used, the small-sample variance increases explosively as the sample size decreases. 
In contrast, with the Gaussian or Laplace kernels, the variance inflation is suppressed.\footnote{A similar pattern is observed for the MSE, since the variance becomes so large that it dominates the squared bias (see the Online Appendix, Section S4.1).}
Therefore, when the sample size is small---or more broadly, when the local sample size near the evaluation point is limited---the kernel selection can substantially affect the stability of the LPD estimator.

\subsection{Summary of the Numerical Findings}
In Sections \ref{subsec: simu discontinuity} and \ref{subsec: simu density}, we observed that the efficiency property depends on the kernel function used, even for large sample sizes. 
The statistical power was insufficient to detect the discontinuity when using the triangular kernel, the most common choice.
The length of the CI was also so large under the triangular kernel that the interpretation of the statistical inference can be unclear. 
MSE suggested a similar conclusion.
In contrast, the LPD estimator using the Gaussian or Laplace kernels performs better.
The length of the CI shrinks, and the power property in discontinuity testing improves.
MSE is also smaller under the Gaussian or Laplace kernels than the triangular kernel.

Similarly, Section \ref{subsec: simu small} suggests that the LPD estimator using the triangular kernel is still inferior when the sample size is small.
In particular, the finite-sample variance explodes quickly, whereas this explosion is suppressed if we employ the Gaussian or Laplace kernels instead. 

In the following sections, we formally investigate the root cause of why the efficiency properties of the LPD estimator depend on the choice of kernel.
The analysis in Section \ref{sec: efficiency} corresponds to Sections \ref{subsec: simu discontinuity} and \ref{subsec: simu density} in that we investigate the asymptotic properties of the LPD estimator. 
After that, in Section \ref{sec: finite}, we theoretically explain the observation in Section \ref{subsec: simu small} by establishing a finite-sample theory that does not rely on asymptotic approximations.

\section{Kernel Selection and Large Sample Properties}\label{sec: efficiency}
In this section, we examine how the asymptotic performance of the LPD estimator depends on the choice of kernel function.
The next subsection introduces notation and several preliminary results, after which we study the asymptotic efficiency among two classes of kernel functions in Section \ref{subsec: asy efficiency}.
We then investigate the statistical power property in Section \ref{subsec: power}.
After that, Section \ref{subsec: why kernel choice matters} explains why this dependence arises, using the equivalent kernel framework.

\subsection{Preliminaries}
Let $\mathcal{X} = (-\infty,x_R) \subset \mathbb{R}$ be the data domain. 
We are given an independent and identically distributed sample $X_{1:n}  = (X_1,\ldots,X_n)$ defined on $\mathcal{X}$ with unknown distribution $F$ with density $f$.
We are interested in the density at the boundary, $f(x_R)$. Throughout this article, we assume $f(x_R) >0$.
The LPD estimator of \cite{Cattaneo_etal:2020} at the boundary point $x_R$ is given by $\hat{f}(x_R)= \bm{e}_1^\prime \hat{\bm{\beta}}(x_R)$, where
\begin{align}
    \hat{\bm{\beta}}(x_R) = \argmin_{\bm{\beta}\in\mathbb{R}^{p+1}} \sum_{i=1}^{n}\left\{
    \hat{F}(X_i) - \bm{r}_p(X_i - x_R)^\prime \bm{\beta} 
    \right\}^2
    \frac{1}{h}K\left(\frac{X_i - x_R}{h}\right),\label{objective function}
\end{align}
where $\bm{e}_1 = (0,1,0,\ldots,0)^\prime$; $\hat{F}(x)=n^{-1}\sum_{i=1}^{n}\mathbf{1}\left\{X_i \leq x\right\}$; $\bm{r}_p(u)=(1, u, u^2, \ldots, u^p)^\prime$; $K(\cdot)$ is a non-negative, symmetric kernel function such that $\int K(u)du=1$; $h(>0)$ is a bandwidth; and $p(\geq1)$ is the polynomial degree.

The asymptotic bias and variance of $\hat{f}(x_R)$ are obtained under standard assumptions in \citet[p.~1451]{Cattaneo_etal:2020} as
\begin{align*}
    & \mathrm{Bias}\left[ \hat{f}(x_R) \right] \approx  \frac{f^{(p)}(x_R)}{(p+1)!} \times \mathcal{B}_{p,K} \times h^p, 
    \quad 
    \mathbb{V}\left[ \hat{f}(x_R) \right] \approx \frac{f(x_R)}{nh}\times\mathcal{V}_{p,K},
\end{align*}
respectively, where $\mathcal{B}_{p,K} = \bm{e}_1^\prime \bm{A}_{p,K}^{-1} \bm{c}_{p,K}$, and $\mathcal{V}_{p,K} = \bm{e}_1^\prime \bm{A}_{p,K}^{-1}\bm{B}_{p,K} \bm{A}_{p,K}^{-1} \bm{e}_1$, with 
\begin{align}
    \bm{A}_{p,K} &= \int_{-\infty}^{0} \bm{r}_p(u)\bm{r}_p(u)^\prime K(u)\, du,\quad
    \bm{c}_{p,K} = \int_{-\infty}^0 \bm{r}_p(u)u^{p+1} K(u)\, du,\notag\\
    \bm{B}_{p,K} &= \int_{-\infty}^{0}\int_{-\infty}^{0}
    \min\{u,v\}\bm{r}_p(u)\bm{r}_p(v)^\prime K(u)K(v) \,dudv\label{Bp}. 
\end{align}
Then, the asymptotic MSE optimal bandwidth can be deduced from these as $h_p^{\mathtt{MSE}} = (
{\mathcal{V}_{p,K}}/\{C_1(p,x_R,F) \mathcal{B}_{p,K}^2\})^{1/(2p+1)} n^{-1/(2p+1)}$, where $C_1(p,x_R,F)$ is some constant that depends only on $p$, $x_R$, and $F$ \citep[p.~1452]{Cattaneo_etal:2020}.
Using these results, the MSE at $h_p^{\mathtt{MSE}}$ is asymptotically given by
\begin{align*}
    \mathrm{MSE}\left[ \hat{f}(x_R) \,;\, h_p^{\mathtt{MSE}} \right] 
    \approx C_2(p,x_R,F) \times \left(\mathcal{B}_{p,K}^2\right)^{1/(2p+1)} \mathcal{V}_{p,K}^{2p/(2p+1)} \times  n^{-2p/(2p+1)}
\end{align*}
with some constant $C_2(p,x_R,F)$ that does not depend on $K$.
We will write the constant part depending on the kernel function as $\mathcal{Q}_{p, K} = (\mathcal{B}_{p,K}^2)^{1/(2p+1)} \mathcal{V}_{p,K}^{2p/(2p+1)}$.

\subsection{Asymptotic Efficiency}\label{subsec: asy efficiency}
\subsubsection{Point Estimation}
We begin by examining how the asymptotic variance and MSE depend on the choice of kernel function.
Based on the preliminary results in the previous subsection, we can analyze this dependence through $\mathcal{V}_{p,K}$ and $\mathcal{Q}_{p,K}$, which characterize the contribution of the kernel function to the variance and MSE-efficiency properties of the LPD estimator.\footnote{The usefulness of the indicator $\mathcal{V}_{p,K}$ may be debatable, as it compares the constant term of the asymptotic variance under a fixed bandwidth, which may not reflect recent empirical practice. Even for inference, bandwidth selection rules typically account for bias in some way, so quantities such as $\Theta_K$ (defined later) are arguably more directly relevant empirically. Nevertheless, we report $\mathcal{V}_{p,K}$ here for completeness and to maintain alignment with the classical kernel smoothing literature. For example, \cite{Gasser_etal:1985} defines and derives the minimum-variance kernel that minimizes the counterpart of this quantity in standard settings.}

We compare two classes of kernel functions.
Following \cite{Fan_Gijbels:1995, Fan_Gijbels:1996}, our first candidate is the Beta family: $\{(1-u^2)_{+}\}^\gamma/\mathrm{Beta}(1/2, \gamma+1)$. Among this family, we consider the uniform ($\gamma=0$), the Epanechnikov ($\gamma=1$), the biweight ($\gamma=2$), and the Gaussian kernels ($\gamma\to\infty$ with appropriate scaling).
In addition to these kernels, to study the case when the kernels are more centered, we also consider a generalized triangular family: $\{(1-|u|)_+\}^m (m+1)/2$.
Among this family, we consider the triangular ($m=1$), what we call $2$-triangular ($m=2$), $3$-triangular ($m=3$), and the Laplace kernels ($m\to\infty$). Note that $m=0$ corresponds to the uniform kernel.
Together, these two classes cover most of the popular kernel functions used in the kernel smoothing literature.

Our kernel selection is also motivated as follows. 
First, the uniform, triangular, and Epanechnikov kernels are implemented in the software packages provided by \citet{Cattaneo_etal:2018, Cattaneo_etal:2022}, and are therefore natural benchmarks. 
In addition, in standard local polynomial regression, the triangular kernel is the MSE-optimal choice at a boundary point \citep{Cheng_etal:1997}, while the uniform kernel is a popular choice for inference \citep{calonico2022coverage} and is the MSE-optimal choice in the LPD estimation at interior points \citep{cattaneo2021local}.
The Epanechnikov kernel is the well-known optimal kernel in the standard kernel estimation at interior points (\citealp{Epanechnikov:1969}), while the biweight kernel is also optimal among kernels with a certain smoothness (\citealp{Muller:1984}).
Furthermore, the Gaussian kernel is a standard choice in the kernel estimation literature and has several optimality properties (\citealp{Cline:1988, Granovsky_Muller:1991}). 
The Laplace kernel is the optimal choice for non-smooth densities in standard kernel estimation \citep{vanEeden:1985}.
Although the $2$- and $3$-triangular kernels seem less standard in the literature, we include them as they are a natural intermediate between the triangular and Laplace kernels.

The asymptotic variance ($\mathcal{V}_{p,K}$) and MSE-efficiency ($\mathcal{Q}_{p,K}$) for these kernels and $p=1,2,3$ are summarized in Table \ref{tab: kernel asy efficiency}.
\begin{table}[t]
    \begin{center}
    \scalebox{1}{
    \begin{tabular}{l|ccc|ccc|c}
    \hline\hline
        \multirow{2}*{Kernel Function} & \multicolumn{3}{c|}{$\mathcal{V}_{p,K}$} & \multicolumn{3}{c|}{$\mathcal{Q}_{p,K}$} & \multirow{2}{*}{$\Theta_K$}\\
         & $\mathcal{V}_{1,K}$ & $\mathcal{V}_{2,K}$ & $\mathcal{V}_{3,K}$ & $\mathcal{Q}_{1,K}$ & $\mathcal{Q}_{2,K}$ & $\mathcal{Q}_{3,K}$\\ \hline
        \multirow{2}*{Uniform} & 1.200 & 5.486 & 14.286 & 1.129 & 3.182 & 6.831 & 8.285\\ 
         & (4.80) & (7.31) & (10.16) & (1.13) & (1.26) & (1.38)& (1.75)\\ \hline
        \multirow{2}*{Epanechnikov} & 1.345 & 5.772 & 14.468 & 1.087 & 2.991 &6.326 & 7.496\\ 
         & (5.38) & (7.70) & (10.29) & (1.09) & (1.18) & (1.28)& (1.58)\\ \hline
         \multirow{2}*{Biweight} & 1.496 & 6.184 & 15.108 & 1.071 & 2.912 & 6.103  & 7.114\\ 
         & (5.99) & (8.24) & (10.74) & (1.07) & (1.15) & (1.24) & (1.50)\\ \hline
         \multirow{2}*{Gaussian} &  0.484 & 1.748 & 3.809 & 1.041 & 2.739 & 5.566 & 5.967\\
         & (1.93) & (2.33) & (2.71) & (1.04) & (1.09) & (1.13)& (1.26)\\ \hline
         \multirow{2}*{Triangular} & 1.371 & 5.714 & 14.026 & 1.064 & 2.873 & 5.989 & 7.053\\ 
         & (5.49) & (7.62) & (9.97) & (1.06) & (1.14) & (1.21)& (1.49)\\ \hline
         \multirow{2}*{$2$-Triangular} & 1.587 & 6.234 & 14.685 & 1.038 & 2.746 & 5.626 & 6.468\\ 
         & (6.35) & (8.31) & (10.44) & (1.04) & (1.09) & (1.14) & (1.37)\\ \hline
         \multirow{2}*{$3$-Triangular} & 1.818 & 6.853 & 15.664 & 1.026 & 2.678 & 5.428 & 6.122\\ 
         & (7.27) & (9.14) & (11.14) & (1.03) & (1.06) & (1.10) & (1.29)\\ \hline
        \multirow{2}*{Laplace} & 0.250 & 0.750 & 1.406 & 1.000 & 2.524 & 4.935 & 4.733\\ 
        & (1.00) & (1.00) & (1.00) & (1.00) & (1.00) & (1.00) & (1.00)\\ \hline
    \end{tabular}}
    \caption{Asymptotic Variance and Efficiency Relative to the Laplace Kernel}
    \label{tab: kernel asy efficiency}
    \end{center}

    \footnotesize
    \renewcommand{\baselineskip}{11pt}
    \textbf{Note:} The constant part of the asymptotic variance ($\mathcal{V}_{p,K}$) and MSE ($\mathcal{Q}_{p,K}$) are shown. $\Theta_K$, defined in Section \ref{sec: inference}, indicates the asymptotic variance under the simple robust bias correction.
    The numbers in parentheses indicate the relative efficiency compared to the Laplace kernel.
\end{table}
We observe that the Laplace kernel not only exhibits better variance properties (at a fixed bandwidth) than commonly used kernels, but also dominates them in terms of asymptotic MSE (at the MSE-optimal bandwidths). 
Notice that, unlike usual, the kernel choice has a \textit{significant} impact on the asymptotic MSE-efficiency. For the case when $p=2$, which is the most standard choice in the literature (\citealp[p.~1452]{Cattaneo_etal:2020}; \citealp[pp.~76--80]{Fan_Gijbels:1996}), the commonly used kernels are 14--26\% less efficient than the Laplace kernel. Intuitively speaking, this implies that the commonly used kernels require approximately 1.2--1.3 times larger sample sizes to achieve the same performance as the Laplace kernel. 

This efficiency loss is in contrast to the standard kernel smoothing at interior points, wherein the efficiency loss between the optimal Epanechnikov kernel and popular choices is quite limited. For example, the efficiency loss from using the Gaussian kernel relative to the optimal Epanechnikov kernel is only $4\%$ (e.g, \citealp[p.~341]{Hansen:2022_prob}).
In this view, contrary to the common belief in kernel smoothing literature, a careful kernel choice is essential in the boundary estimation using the LPD estimator.

\subsubsection{Inference under the Simple Robust Bias Correction}\label{sec: inference}
We then study the implications of kernel choice for inference.
To perform statistical inference, we have to deal with the asymptotic smoothing bias.
\cite{Cattaneo_etal:2020} propose to rely on a simple robust bias correction approach (\citealp[Remark 7]{Calonico_etal:2014}; \citealp{calonico2018effect}). 
Specifically, the authors propose using local cubic estimation (i.e., $p=3$) together with the MSE optimal bandwidth for $p=2$ (i.e., $h_2^{\mathtt{MSE}}$).
By doing so, the (bias-corrected) LPD estimator, $\hat{f}_{\mathrm{bc}}$, is correctly centered.

The variance of $\hat{f}_{\mathrm{bc}}$ is asymptotically approximated by
\begin{align*}
    \mathbb{V}\left[ \hat{f}_{\mathrm{bc}}(x_R) \right]\approx f(x_R)\mathcal{V}_{3,K}/(nh_2^{\mathtt{MSE}}) = \Theta_K \times C_3(x_R, F) n^{-4/5},
\end{align*}
where $\Theta_{K} = (\mathcal{B}_{2,K}^2)^{1/5} \mathcal{V}_{2,K}^{-1/5} \mathcal{V}_{3,K}$.
Therefore, we can examine how the asymptotic variance depends on $K$ via $\Theta_K$, a key quantity that governs the length of the CIs.

The last column of Table \ref{tab: kernel asy efficiency} summarizes $\Theta_K$ for the eight kernels considered before. 
We observe that the Laplace kernel again performs best among the eight kernels.
We also find that the triangular kernel is approximately 50\% less efficient than the Laplace kernel, which is a remarkably large gap. 
In view of the standard local polynomial smoothing literature, one might be tempted to employ the uniform kernel for inference, as it is the CI-length-optimal kernel \citep{calonico2022coverage}. 
However, this turns out to be an even worse choice, as it exhibits 75\% less efficiency.

These magnitudes are noteworthy. 
The triangular and uniform kernels require sample sizes approximately 1.6 to 1.9 times larger to achieve the same performance as the Laplace kernel in terms of asymptotic variance or interval length. 

\subsection{Statistical Power}\label{subsec: power}
Building on the discussion in Section \ref{sec: inference}, we here investigate the statistical power property of the LPD-based discontinuity testing.
\cite{Cattaneo_etal:2020} shows that, under fixed alternatives, the test is consistent (i.e., its power converges to one asymptotically) regardless of the kernel choice.
While this fixed-alternative analysis is valuable for establishing consistency, it is not designed to be informative about power comparisons across kernels or about the finite-sample differences we documented in Section \ref{subsec: simu discontinuity}; see \citet[Chapter 14]{vanderVaart:1998} for a general discussion.
Motivated by this, and to better understand the power behavior in our setting, we study power against a local alternative.

Consider the testing problem $\mathbb{H}_0: f(0+) = f(0-)$, but the following local alternative is correct:
\begin{align*}
    \mathbb{H}_{1,\mathtt{LA}}: \tau_n \coloneqq f(0+) - f(0-) = {c}n^{-2/5},\quad c\in\mathbb{R}\backslash \{0\}.
\end{align*}
Let $\hat{f}_+(0)$ and $\hat{f}_-(0)$ be the LPD estimators only using the data lying over the right side region and the left side region, respectively.
Following the recommendation of \cite{Cattaneo_etal:2020} and empirical practice, we specifically consider the case where $p=3$ together with the MSE-optimal bandwidths for $p=2$ (i.e., $h_{2+}^{\mathtt{MSE}}$ and $h_{2-}^{\mathtt{MSE}}$).
Then $\tau_n$ can be estimated by $\hat{\tau}_n \coloneqq ({n_+}/{n}) \hat{f}_+(0) - ({n_-}/{n}) \hat{f}_-(0)$.
\cite{Cattaneo_etal:2020} proposes the statistic 
\begin{align*}
    T_{3}\coloneqq \frac{\hat{\tau}_n}{\sqrt{\frac{1}{nh_{2+}^{\mathtt{MSE}}} \hat{\mathscr{V}}_{+} + \frac{1}{nh_{2-}^{\mathtt{MSE}}} \hat{\mathscr{V}}_{-}}},
\end{align*}
where $\hat{\mathscr{V}}_{+}$ and $\hat{\mathscr{V}}_{-}$ are consistent estiamtors of the asymptotic variances, ${\mathscr{V}}_{+}\coloneqq f(0+)\mathcal{V}_{3,K}$ and ${\mathscr{V}}_{-}\coloneqq f(0-)\mathcal{V}_{3,K}$.
To study the power against the local alternative $\mathbb{H}_{1,\mathtt{LA}}$, we make the following assumptions. 
Fix $c\in\mathbb{R}\backslash\{0\}$.
\begin{assL} \label{as:DGP-local}
There exists a sequence of distribution functions $\{F_n\}_{n\ge 1}$ such that
\begin{itemize}
    \item[(i)] $\{X_i\}_{i=1}^n$ is an i.i.d.~sample from distribution $F_{n}$ for each $n\in\mathbb{N}_+$;
    \item[(ii)] for all $n\in\mathbb{N}_+$ and $ \ell=0,1,2,3,4$, there exists $C_F<\infty$ such that $|F_{n}^{(\ell)}(x)| <C_F$ separately on $(-s_\ell, 0)$ and $(0,s_r)$, and the density function $f_{n}(x)$ is bounded and bounded away from 0 by some constant independent of $n$ over its support $(-s_\ell,s_r)$, where $s_\ell$ and $s_r$ are some positive constants independent of $n$;
    \item[(iii)] for each $n\in\mathbb{N}_+$, it holds that $f_{n}(0_+) - f_{n}(0_-) = cn^{-2/5}$. 
\end{itemize}
\end{assL}
\begin{assL} \label{as:kernel}
    The kernel function $K$ 
    \begin{itemize}
        \item[(i)] is a nonnegative, symmetric, and continuous second-order kernel;
        \item[(ii)] is either compactly supported or has an exponentially decaying tail. 
    \end{itemize}
\end{assL}
The assumption \ref{as:DGP-local}(ii) is the standard regularity conditions on the data-generating process. 
In particular, this assumption is almost the same as Assumption 2 in the supplemental appendix for \cite{Cattaneo_etal:2020}. 
The only notable difference is that we state it for an $n$-varying sequence $\{F_n\}$ to accommodate the local alternative. 
Plus, we note that we could allow $(s_l,s_r)$ to depend on $n$, while such a generalization would only lengthen the technical discussion and would not yield additional insight for our purposes. 

Under these assumptions and $\mathbb{H}_{1,\mathtt{LA}}$, we have the following distributional approximation.
\begin{lemma}\label{lemma:gaussian approximation}
    Under Assumptions \ref{as:DGP-local} and \ref{as:kernel}, it holds that $\sup_{z\in\mathbb{R}} \left| \mathbb{P}_{n}(T_3 \le z) - \Phi(z-  \tilde{\tau}_n) \right| \to 0$, where $\tilde \tau_n \coloneqq \left( \mathscr{V}_{n,+}/(nh_{2+}^{\mathtt{MSE}}) + \mathscr{V}_{n,-}/(nh_{2-}^{\mathtt{MSE}})\right)^{-1/2}\tau_n$, and $\Phi(\cdot)$ denotes the distribution function of the standard normal distribution $\mathcal{N}(0,1)$.
\end{lemma}
Lemma \ref{lemma:gaussian approximation} justifies the following approximation:
\begin{align*}
    T_3 = \frac{\tau_n}{\sqrt{\frac{1}{nh_{2+}^{\mathtt{MSE}}} {\mathscr{V}}_{+} + \frac{1}{nh_{2-}^{\mathtt{MSE}}} {\mathscr{V}}_{-}}} + Z + o_p(1)
    = \frac{\tau_n}{\sqrt{C_4(F)\Theta_K n^{-4/5}}} + Z + o_p(1),
\end{align*}
where $Z\sim \mathcal{N}(0,1)$, and $C_4(F)>0$ is a constant depending only on the underlying distribution $F$.
We can rewrite as 
\begin{align*}
    T_3 = \frac{\tau_n}{\sqrt{C_4(F)\Theta_K n^{-4/5}}} + Z + o_p(1) = 
    \frac{c/\sqrt{C_4(F)}}{\sqrt{\Theta_K}} + Z + o_p(1).
\end{align*}
Then, the power function can be asymptotically approximated by $\P{|T_3| > z_{1-\alpha/2}} = \pi_\alpha(\Theta_K) + o(1)$, where
\begin{align*}
    \pi_\alpha(\Theta_K)\coloneqq 
    1 - \P{Z \leq z_{1-\alpha/2} - \frac{c/\sqrt{C_4(F)}}{\sqrt{\Theta_K}}} + 
    \P{Z \leq -z_{1-\alpha/2} - \frac{c/\sqrt{C_4(F)}}{\sqrt{\Theta_K}}},
\end{align*}
which is a decreasing function of $\Theta_K$.
Therefore, kernels with a smaller $\Theta_K$ are preferable from the standpoint of statistical power.
Taken together with the previous section's argument that $\Theta_K$ can vary substantially across kernels, this implies that kernel choice can be of first-order importance for power, and that popular kernels can deliver lower power than, for example, the Laplace kernel.

\subsection{Why Does the Kernel Choice Matter?}\label{subsec: why kernel choice matters}
We have observed that kernel selection has a non-trivial impact on the efficiency of the LPD estimator, affecting both the MSE, CI length, and power.
Why does this happen?
This subsection investigates the underlying reason for this dependence.

\subsubsection{MSE}
We first aim to develop intuition for why the MSE varies substantially with the choice of kernel function.
Our discussion relies on the equivalent kernel \citep[p.~63]{Fan_Gijbels:1996}, which is helpful in understanding how the LPD applies weights to each datum point.\footnote{The use of equivalent-kernel representations in the context of LPD estimation is not new and is rather common in the literature. For example, \citet{cattaneo2021local} uses equivalent kernels to study the variance properties of their local regression distribution estimators at interior points and to show that the uniform kernel is MSE-optimal in the LPD estimation at interior points. Our focus, however, is on boundary behavior, and the discussion that follows contains several observations that are novel to both the kernel-smoothing and LPD literatures.}
The LPD estimator can be rewritten as follows (see the Online Appendix Section S2 for details):
\begin{align*}
    \hat{f}(x_R) &= 
    \frac{1}{nh}\sum_{i=1}^{n} 
    \int_{-\infty}^{0} \bm{e}_1^\prime \bm{A}_{p,K}^{-1} \bm{r}_p\left( u\right) K\left( u\right)
    \mathbf{1}\left\{\frac{X_i - x_R}{h} \leq u\right\}\,du + o_p(1).
\end{align*}
This expression suggests that the LPD estimator at boundary points is asymptotically equivalent to the standard kernel density estimator (KDE) using the following (asymmetric) kernel function,
\begin{align}
    K^*_{p,K}(u) = 
    \int_{-\infty}^{0} \bm{e}_1^\prime \bm{A}_{p,K}^{-1} \bm{r}_p\left( z\right) K\left( z\right)
    \mathbf{1}\left\{u \leq z\right\}
    \,dz.\label{equivalent kernel}
\end{align}
Therefore, an investigation of $K^*_{p,K}$ will offer insights into how the LPD utilizes the information contained within the sample.
The following lemma characterizes the basic properties of $K^*_{p,K}$.
\begin{lemma}\label{lemma: equiv kernel}
    Assume $K$ is a symmetric, nonnegative second-order kernel function such that
    \begin{align*}
        \int_{0}^{\infty} u^{2p} K(u) \,du < \infty.
    \end{align*}
    Then $K^*_{p,K}$ defined in \eqref{equivalent kernel} satisfies $K^*_{p,K}(0)=0$ and
    \begin{align}
        \int_{-\infty}^{0} u^j K^*_{p,K}(u) \,du =\delta_{0,j}\,\,(0\leq j \leq p-1),\label{moments of equiv kernel of LPD}
    \end{align}
    where $\delta_{0,j}$ takes $1$ if $j=0$ and $0$ otherwise.
\end{lemma}

Lemma \ref{lemma: equiv kernel} shows that the equivalent kernel $K^*_{p,K}$ is the $p$-th order boundary kernel that satisfies the moment conditions of \citet[p.~244]{Gasser_etal:1985}, reflecting that the LPD estimator is boundary adaptive.
The property of $K^*_{p,K}(0)=0$ arises from our use of $\hat{F}$ as the ``dependent variable."
At the boundary, $\hat{F}(x_R) - \bm{r}_p(0)^\prime\hat{\bm{\beta}}$ shrinks very quickly \citep[Supplemental Appendix]{Cattaneo_etal:2020}, eliminating its contribution to the loss in \eqref{objective function}. 
As a result, the LPD asymptotically does not use information from the data at the boundary.

A well-known result from earlier studies on boundary kernel methods, such as \citet{Muller:1991}, suggests that there is no optimal weighting function among kernels satisfying \eqref{moments of equiv kernel of LPD} and $K^*_{p,K}(0)=0$.
However, we can show that an optimal weighting scheme exists within an extended class of weight functions that allows $K^*_{p,K}(0) \neq 0$.
The Online Appendix (Section S1.5) shows that it corresponds to the equivalent kernel of \citeauthor{Cheng_etal:1997}'s (\citeyear{Cheng_etal:1997}) local polynomial density estimator using the triangular kernel. Therefore, the closer $K_{p,k}^*$ is to this optimal weight, the more the MSE of the LPD estimator with kernel $K$ improves.

\begin{figure}
    \begin{center}
        \includegraphics[width=0.6\linewidth]{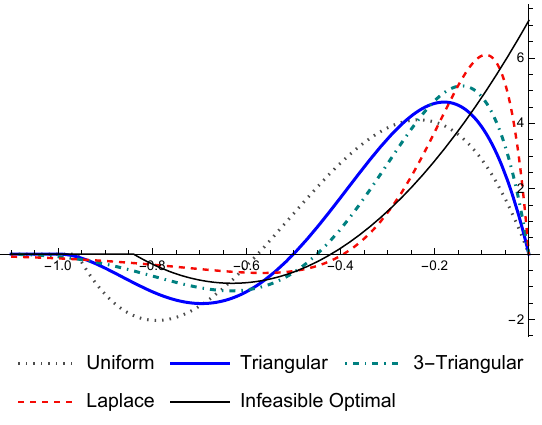}
    \caption{Equivalent Kernels}
     \label{fig: equivalent kernels}
    \end{center}

     \footnotesize
    \renewcommand{\baselineskip}{11pt}
    \textbf{Note:} $K^*_{2,K}$'s are scaled so that these kernels have the same asymptotic variance.
\end{figure}
Figure \ref{fig: equivalent kernels} displays several equivalent kernels from the generalized triangular family, scaled to have the same asymptotic variance ($\mathcal{V}_{2,K}$) for $p=2$, along with the infeasible optimal weight.
We find that the triangular and uniform kernels deviate substantially from the optimal weight.
This degree of deviation reflects the room for MSE-efficiency gains.

This deviation suggests that more centered kernels (i.e., those that place more weight near the evaluation point) can deliver improved MSE performance relative to commonly used kernels. 
Within the Beta and generalized triangular families, this can be achieved by choosing a larger $\gamma$ or $m$. 
Moreover, the efficiency gains do not appear to be exhausted within these families; consequently, among the kernels we consider, the Gaussian and Laplace kernels yield the best MSE performance.
As a result, Figure \ref{fig: equivalent kernels} shows that the Laplace kernel more closely matches the optimal weight and applies more weight near the evaluation point for a given variance, thereby delivering a lower MSE than standard kernels.\footnote{Figure \ref{fig: equivalent kernels} suggests that even relative to the Laplace kernel, there remains additional scope for efficiency improvements. Indeed, for example, we can construct a kernel function based on high-order polynomials, and a numerical search can find coefficients that achieve a smaller MSE than that of the Laplace kernel. Although appealing, such a weight may not be monotonically decreasing, and it can be computationally costly when performing estimation. Hence, our preferred choice is the Laplace for its simple form and satisfactory performance in numerical studies in Section \ref{sec: motivation}. A detailed study on this issue is postponed to future research.}

The analysis above offers a new perspective on standard KDE at boundary points. 
Classical contributions, such as \citet{Muller:1991} and \citet{Gasser_etal:1985}, develop boundary-kernel methods and find that an MSE-optimal kernel does not exist within the standard kernel class. 
However, they do not benchmark conventional kernels against the infeasible optimum within an enriched class of weighting functions. 
As a result, the fact that reasonable alternative kernels---for instance, the equivalent kernels of the Gaussian or Laplace kernels---can deliver numerically substantial MSE efficiency gains at the boundary has been largely overlooked in the kernel smoothing literature.
For example, \citeauthor{Muller:1991}'s (\citeyear{Muller:1991}) preferred kernel (\citealp[lines 13--31 on p.~524 and Section~3]{Muller:1991}) coincides with the equivalent kernel of the LPD estimator under the uniform kernel. 
Nevertheless, this kernel remains far from optimal: as shown in Table \ref{tab: kernel asy efficiency}, the equivalent kernels induced by the other kernels attain strictly better MSE efficiency. 
To the best of our knowledge, the magnitude and practical relevance of this efficiency gap have not been emphasized in the existing literature.

\begin{remark}[Intuitive Desirability]\label{rem: desirability}
    At first glance, employing the non-compactly supported kernels such as the Gaussian or Laplace seems counterintuitive, as they appear to ``use" the datum point \textit{far away from} the evaluation point.
    However, the analysis above shows that, after appropriate rescaling to keep the variance unchanged, these kernels asymptoticaly allow more weight to be placed \textit{near} the evaluation point than commonly used kernel functions (see Figure \ref{fig: equivalent kernels}), thereby improving the intuitive appeal of such a weighting scheme.
    Moreover, even in situations where one hesitates to trade off a larger bias for a smaller MSE, the asymptotic bias can be reduced at no cost by replacing the conventional kernel with the (scaled) Laplace kernel while keeping the bandwidth unchanged. (To see this, recall that the asymptotic variance of each scaled kernel is the same in Figure \ref{fig: equivalent kernels} under the fixed same bandwidth, while the asymptotic bias is smaller under the scaled Laplace kernel.)
\end{remark}

\begin{remark}[Interior Points]\label{rem:int}
    At this point, it is natural to ask whether a similar efficiency gain is also possible at interior points. 
    The answer is no. The mechanism behind the gains we document is inherently a boundary phenomenon and is closely related to boundary-kernel methods. 
    In particular, the condition $K_{p,K}^*(0)=0$ plays a central role: it arises because the discrepancy $\hat{F}(x_R)-\bm{r}_p(0)^\prime \hat{\bm{\beta}}$ shrinks quickly only at boundary points. 
    In fact, \citet{cattaneo2021local} shows that, at interior points, the uniform kernel is asymptotically MSE-optimal.
\end{remark}
\begin{remark}[Local Polynomial Regression]
    Similarly, one may wonder whether comparable efficiency gains can arise in the standard local polynomial regression setting. 
    Again, the answer is no, for essentially the same reason as in the previous remark. 
    In the regression setting, the outcome is the usual dependent variable rather than the empirical distribution function, and the resulting equivalent kernel places positive weight at the evaluation point regardless of whether the point is in the interior or at the boundary. 
    Consequently, the efficiency gains documented above do not carry over to standard local polynomial regression.
\end{remark}

\begin{remark}[A Word of Caution]
    As may be clear from the discussion in this section, our point is \textit{not} that unbounded-support kernels mechanically reduce variance. Indeed, this is the usual message in the standard kernel-smoothing literature. 
    Rather, we argue that, owing to a distinctive feature of the LPD estimator, Gaussian and Laplace kernels can outperform popular kernels in terms of MSE and testing efficiency even after accounting for the additional bias they may induce.
\end{remark}

\subsubsection{Inference}
We now turn to investigate why kernel choice plays a more prominent role in simple robust bias-corrected inference.

First, suppose the point estimation scenario with polynomial order $p=2$.
Under the MSE-optimal bandwidth $h_2^{\mathtt{MSE}}$, the bias and variance are balanced so that the variance is (asymptotically) proportional to $\mathcal{Q}_{2, K}$. 
Now, if we increase the polynomial order to $p=3$, then the variance increases by a factor of $\mathcal{V}_{3, K} / \mathcal{V}_{2, K}$. Hence, the constant factor of the variance under simple robust bias correction is given by $\mathcal{Q}_{2, K} \times \mathcal{V}_{3, K} / \mathcal{V}_{2, K}$, which a simple calculation shows to be equal to $\Theta_K$.
Therefore, in inference, the quantities $\mathcal{Q}_{2,K}$ and $\mathcal{V}_{3,K} / \mathcal{V}_{2,K}$ play central roles.

We have seen that $\mathcal{Q}_{2,K}$ is smaller for kernels with larger values of $m$ or $\gamma$.
Moreover, Figures \ref{fig: variability} and S3 (in the Online Appendix) suggest that larger $m$ or $\gamma$ also lead to a smaller increase in variability, $\mathcal{V}_{p+1,K}/\mathcal{V}_{p,K}$, mirroring the patterns observed in standard local polynomial smoothing (\citealp[pp.~77–79]{Fan_Gijbels:1996}).
\begin{figure}
    \begin{center}
        \includegraphics[width=0.6\linewidth]{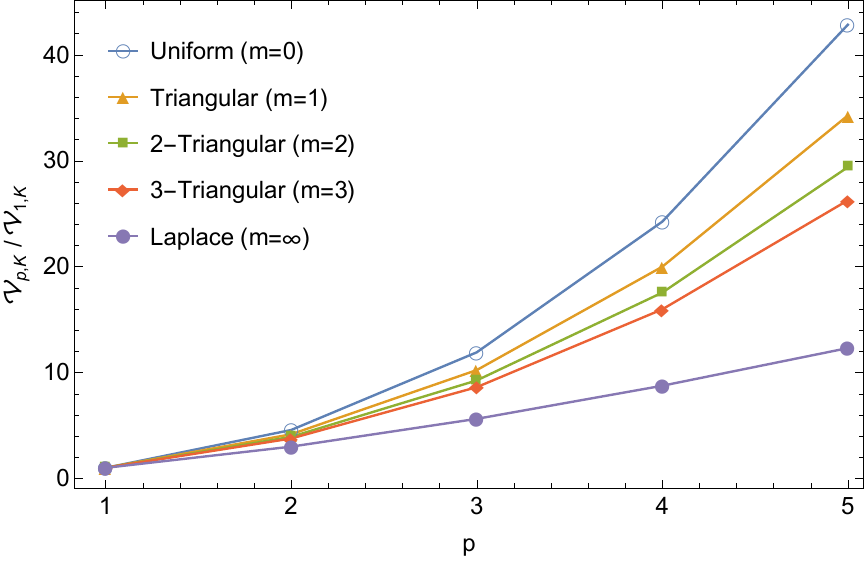}
    \caption{Increase of Variability}
     \label{fig: variability}
    \end{center}
\end{figure}
Hence, both factors (i.e., $\mathcal{Q}_{2, K}$ and $\mathcal{V}_{3, K} / \mathcal{V}_{2, K}$) become smaller when an appropriate kernel is chosen instead of the commonly used ones.
Consequently, the asymptotic variance under simple robust bias correction is much smaller with the Laplace kernel, illustrating a reason for the considerable efficiency gains reported in Table \ref{tab: kernel asy efficiency}.

\begin{remark}[Interior Points]
    Although modifying the kernel does not improve MSE efficiency at interior points (Remark \ref{rem:int}), the interior-point analogue of $\Theta_K$ can still vary meaningfully with kernel choice. 
    In the Online Appendix (Table S1), we find that the Laplace kernel performs best among the eight kernels considered, with the Gaussian kernel a close second.
    Hence, when statistical inference is the primary interest, the Gaussian or Laplace kernels would be recommended regardless of the evaluation points.
    
    We note that this observation does not contradict \citet{cattaneo2021local}, who develop an approach that attains the smallest asymptotic variance (i.e., the analogue of $\mathcal{Q}_{p,K}$) within the class of compactly supported kernels. 
    In contrast, our comparison accounts for the kernel-induced rescaling of the bandwidth by focusing on the simple robust bias-correction strategy (i.e., $\Theta_K$).
\end{remark}

\begin{remark}[McCrary Test]\label{rem: mccrary}
    In view of the shape of the equivalent kernels of the LPD estimator, $K_{p,K}^*(0)=0$ may not be very appealing, as it is counterintuitive, and it induces the pronounced oscillations characteristic of the equivalent kernels of the LPD estimator.
    To avoid this, one might be tempted to employ the alternative local linear density estimator of \citet[Section~2.2]{Cheng_etal:1997}, whose equivalent kernel satisfies $K_{p,K}^*(0)>0$, thereby yielding a smoother shape than that of the LPD estimator. 
    Asymptotically, or from a theoretical standpoint, this estimator may perform well.
    However, a drawback of \citeauthor{Cheng_etal:1997}'s estimator is that it requires pre-binning, which reduces the effective sample size in finite samples. 
    Indeed, \citet[p.~1454]{Cattaneo_etal:2020} report that the \citeauthor{McCrary:2008} test, which is based on the \citeauthor{Cheng_etal:1997} estimator, may suffer from low power. 
    Moreover, \citet{Kuehnle_etal:2021} find that test results can be sensitive to the pre-binning specification. 
    Accordingly, we recommend using the LPD framework while tailoring the kernel choice to the research objective. 
    For instance, when statistical power is a primary concern, LPD with the Laplace or Gaussian kernel is a particularly attractive option.
\end{remark}

\section{Finite Sample Variance}\label{sec: finite}
Thus far, we have examined how the efficiency properties of the LPD estimator depend on the choice of kernel function in the asymptotic sense.
However, as observed in Section \ref{subsec: simu small}, there remains another important issue: the explosion of finite-sample variance.
This section provides a theoretical explanation for this numerical finding.
In particular, we formally show that the finite-sample variance of the LPD estimator is unbounded when compactly supported kernels are used, whereas it remains bounded when kernels with unbounded support are employed. 

At first glance, the small sample property may seem less important, as the sample size is often large in recent empirical studies. 
However, the finite sample variance property is especially relevant in one of the most important applications: manipulation testing in RD designs.
To motivate readers, we begin by explaining the unique feature of the score manipulation problem.

\subsection{A Leading Example: Score Manipulation in RD Designs}\label{sec finite manip}
In the RD analysis, score manipulation is a central concern, and most of the recent RD studies report results from the LPD-based density discontinuity test to assess the presence of manipulation.
In many empirical applications, manipulation is expected to be one-sided---i.e., from below the cutoff to above it (see \citealp{Gerard_etal:2020}).
For a typical educational example, if a generous professor awards a few extra points to students whose test scores are 58 or 59, where 60 is the minimum passing threshold, then one-sided score manipulation arises.

An important implication of such one-sided manipulation is that the sample size below the cutoff necessarily decreases, as illustrated in Figure \ref{fig:manip}. 
This reduction in effective sample size makes the finite-sample variance behavior observed in Section \ref{subsec: simu small} particularly relevant: The one-sided manipulation always inflates the variance of the LPD estimator on one side. 

When the extent of manipulation is large, variance reduction becomes particularly important.
A greater manipulation further reduces the sample size below the cutoff. Consequently, test performance does not necessarily improve with a larger effect size (or a larger jump); rather, estimation variance can become so severe that the test effectively breaks down.

This contrasts sharply with standard inference settings, in which a larger effect size typically increases detection probability, making variance reduction only of secondary importance.
This is not necessarily the case in the context of manipulation testing, making it crucial to control the finite-sample variance.
\begin{figure}[t]
    \begin{center}
        \includegraphics{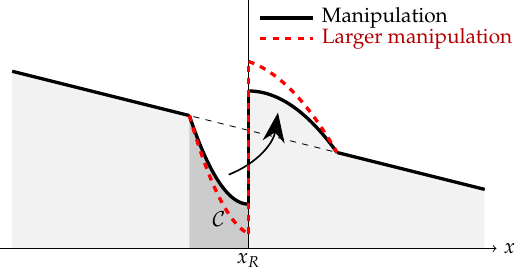}
        \caption{Manipulation and Sample Size around the Cutoff}
        \label{fig:manip}
    \end{center}

    \footnotesize
    \renewcommand{\baselineskip}{11pt}
    \textbf{Note:} The thick lines are the density of the running variable, and the thin dashed line represents the density before manipulation. The manipulation happens in the region $\mathcal{C}$.
\end{figure}

\subsection{Finite-Sample Variance and Kernel Support}\label{subsec: finite sample variance}
\subsubsection{Variance Inflation under Compactly Supported Kernels}
We begin by providing a theoretical explanation for the variance inflation under the popular kernels, including the uniform and triangular, as observed in Figure \ref{fig: finite variance}.
We make the following assumptions.
\begin{assF}\label{assumption kernel compact}
    The kernel function is non-negative, bounded, and compactly supported.
\end{assF}
\begin{assF}\label{assumption for variance explosion}
    $f(x)$ is supported by a bounded region, $\mathcal{S}=[x_L,x_R)$, and $0<\delta\leq f(x) \leq \Delta <\infty$ on $\mathcal{S}$.
\end{assF}
\begin{assF}\label{assumption no tie}
    There exists no tie in $X_{1:n}$.
\end{assF}
Assumptions \ref{assumption for variance explosion} and \ref{assumption no tie} are standard regularity conditions.
Note that Assumption \ref{assumption no tie} holds with probability one when $f$ is continuous. 
Under these assumptions, we have the next theorem: 
\begin{theorem}\label{theorem: infinite variance}
    Under Assumptions \ref{assumption kernel compact}, \ref{assumption for variance explosion}, and \ref{assumption no tie}, the LPD estimator $\hat{f}(x_R)$ of degree $p=1$ does not have a second moment.
\end{theorem}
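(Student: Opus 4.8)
The plan is to bound the second moment from below by restricting attention to the event on which exactly $p+1=2$ observations fall inside the effective kernel window, where the estimator degenerates to a simple interpolation whose volatility I can control explicitly. Write $A=\{n_0=2\}$, where $n_0$ is the number of sample points in $[x_R-h,x_R)$; since the kernel is compactly supported (Assumption \ref{assumption kernel compact}) and $\mathcal{X}=(-\infty,x_R)$, only these $n_0$ points receive positive weight $K_h(X_i-x_R)>0$, so the weighted least-squares problem defining $\hat{\beta}(x_R)$ involves only them. As $\hat{f}(x_R)^2\ge 0$, I would first note $\E{\hat{f}(x_R)^2}\ge\E{\hat{f}(x_R)^2\mathbf{1}\{A\}}$, and under Assumption \ref{assumption for variance explosion} with $h$ small enough that $x_R-h>x_L$ one checks $\P{A}>0$, so restricting to $A$ is meaningful.

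On $A$ there are exactly two weighted points, $X_{(1)}<X_{(2)}$ in $[x_R-h,x_R)$ (distinct by Assumption \ref{assumption no tie}), so the design matrix $\sum_i r_1(X_i-x_R)r_1(X_i-x_R)^\prime K_h(X_i-x_R)$ is a weighted sum of two rank-one terms built from the distinct regressors $(1,X_{(j)}-x_R)$; it is invertible and the fit passes exactly through the two response values, with the weights cancelling. Hence $\hat{f}(x_R)=e_1^\prime\hat{\beta}(x_R)$ equals the slope of the interpolating line,
\begin{align*}
\hat{f}(x_R)=\frac{\hat{F}(X_{(2)})-\hat{F}(X_{(1)})}{X_{(2)}-X_{(1)}}.
\end{align*}
The key simplification is the numerator: because no sample point lies strictly between $X_{(1)}$ and $X_{(2)}$ on $A$ and there are no ties, $\hat{F}(X_{(2)})-\hat{F}(X_{(1)})=1/n$ exactly. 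Thus $\hat{f}(x_R)=1/(nD)$ on $A$, with gap $D:=X_{(2)}-X_{(1)}$. This is the step that sidesteps the whole-sample dependence of the ``response'' $\hat{F}$ which obstructs a direct appeal to \cite{Seifert_Gasser:1996}: on $A$ the numerator is the deterministic constant $1/n$ and all randomness sits in the denominator.

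It then remains to show $\E{D^{-2}\mathbf{1}\{A\}}=\infty$. Writing the joint law of the configuration ``two points in the window at $u<v$, the remaining $n-2$ in $[x_L,x_R-h)$'' gives, up to the combinatorial factor $n(n-1)$, the density $n(n-1)f(u)f(v)\,F(x_R-h)^{\,n-2}$ on $\{x_R-h\le u<v<x_R\}$. Lower-bounding $f\ge\delta>0$ (Assumption \ref{assumption for variance explosion}) and using $F(x_R-h)>0$, I would reduce the computation to the purely geometric integral
\begin{align*}
\E{D^{-2}\mathbf{1}\{A\}}\ \ge\ c\int_{x_R-h}^{x_R}\!\!\int_{u}^{x_R}\frac{1}{(v-u)^2}\,dv\,du,\qquad c>0,
\end{align*}
whose inner integral already diverges at $v=u$, forcing the whole expression to $+\infty$. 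Combining with $\hat{f}(x_R)^2=1/(n^2D^2)$ on $A$ yields $\E{\hat{f}(x_R)^2}=\infty$.

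The main obstacle is the last paragraph: one must set up the joint density of the event $A$ with the correct normalisation and combinatorial weight, verify $0<F(x_R-h)<1$ from the bounded-support assumption (so that both ``points in the window'' and ``points below the window'' occur with positive probability), and confirm that the divergence comes precisely from the diagonal $v\downarrow u$, where the integrable-looking density $f(u)f(v)$ stays bounded below while $(v-u)^{-2}$ blows up. The remaining technical care is measurability and the harmless restriction to $A$; the conceptual content --- that a near-coincidence of the two window points sends the interpolating slope to infinity with non-integrable squared magnitude --- is exactly the finite-sample analogue of the local-polynomial variance inflation of \cite{Seifert_Gasser:1996}.
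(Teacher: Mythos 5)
Your proposal is correct and follows essentially the same route as the paper's proof: restrict attention to the event $\{n_0=2\}$, observe that the degree-one weighted fit is the interpolating line through the two window points so that $\hat f(x_R)$ equals exactly $1/n$ divided by their gap, and show that the expected squared reciprocal gap is infinite using the order-statistics density bounded below via $f\geq\delta$. The only (immaterial) difference is where the divergence is extracted: you integrate the joint density of the two window points and blow up along the diagonal $v\downarrow u$, while the paper first bounds the gap by $x_R-X_{(n-1)}$ and obtains the divergent boundary integral $\int_{x_R-h}^{x_R}(x_R-z)^{-1}\,dz$.
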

\begin{remark}
The statement should also hold for $p\geq2$, as the variance will inflate as the number of parameters to be estimated increases. 
Proving this formally seems to require an algebraic effort, but we provide proof for the case when $p=2$ in the Online Appendix to illustrate that the essential problem is the same.
\end{remark}

Theorem \ref{theorem: infinite variance} states that we cannot have a bounded variance of the LPD estimator.
This provides a theoretical explanation for the variance inflation under the compactly supported triangular kernel, observed in Figure \ref{fig: finite variance}.

A heuristic explanation of why the variance inflates is as follows.
Assume that the kernel function is compactly supported and fix $h$ and $p$.
In finite samples, the local sample size within the bandwidth $h$ is not infinitely many, rather it can be only a few compared to the polynomial degree $p$.\footnote{In practice, the popular bandwidth selectors in recent empirical economics are often of the plug-in type. When such a selector is used, the bandwidth takes the form of $h=C\times n^{-1/5}$, where $n$ is the \textit{whole} sample size. Consequently, the bandwidth can be small in terms of the \textit{local} sample size, especially when the whole sample size $n$ is large---as is typically the case in recent empirical applications.}
In such a case, the estimates can be unstable in general.
Here, importantly and in contrast to the standard KDE, local polynomial fitting allows the slope term of $\hat{\bm{\beta}}$ (i.e., $\hat{f}(x_R)$) to take arbitrarily large values, especially when observations are located very closely.
As an extreme case, consider $p=1$ and only two data points within the bandwidth. The LPD estimate then corresponds to the slope of the line connecting these two points (i.e., $(\hat{F}(X_{2}) -\hat{F}(X_{1}))/((X_{2} - X_{1})$ when $X_2>X_1$), which diverges as the points become closer.
As a result, $\hat{f}(x_R)$ can become excessively large with a certain probability, and the finite-sample variance explodes.

Mathematically (but still heuristically), this intuition can be formulated as follows.
Using the law of total variance, we have that 
\begin{align*}
    \V{\hat{f}(x_R)} &\geq \E{\V{\hat{f}(x_R) \mid \mathbf{1}\left\{n_0 \leq p+2\right\}}}\geq \V{\hat{f}(x_R)  \mid  n_0 \leq p+2} 
    \underbrace{\P{n_0 \leq p+2}}_{>0},
\end{align*}
where $n_0$ is the local sample size within the bandwidth, that is, the number of observations in $[x_R - h,x_R)$.
Here, the event $\{n_0 \leq p+2\}$ can be viewed as a mathematical formulation of the intuition that ``the effective sample size is small compared to the model flexibility."
Recall that the LPD estimator is a coefficient of a polynomial regression model on $[x_R - h,x_R)$.
Then, we can deduce from the classic linear regression theory (e.g., \citealp{Kinal:1980}) that the conditional variance on the right-hand side diverges, thereby leading to $\mathbb{V}[\hat{f}(x_R)] = \infty$.

The preceding heuristics do not rely on any LPD-specific features, but rather on the general nature of local polynomial fitting.
In this sense, this erratic behavior is not unique to the LPD estimator but inherent to local polynomial fitting in general. 
Indeed, \citet{Seifert_Gasser:1996} shows that the local linear regression estimator with a compactly supported kernel has infinite finite-sample variance under homoskedasticity. 
Hence, our result shows that the LPD estimator also inherits an \textit{undesirable} feature associated with local polynomial smoothing techniques. (See the Online Appendix for further details.)

\subsubsection{Variance Suppression under Unbounded Supported Kernels}
A simple remedy for this result is available: use a non-compactly supported kernel, as suggested by Figure \ref{fig: finite variance}.\footnote{Another possible solution is to manually enlarge the bandwidth. However, doing so may introduce room for specification search and somewhat undermine one of the appealing features of the LPD approach, or its fully automatic implementation. Moreover, when conducting inference, a manually increased bandwidth can be too large, potentially rendering the procedure invalid (see related discussion in \citealp{Calonico_etal:2014}).}
Intuitively, the infinite variance problem does not arise if we use a non-compactly supported kernel since the ``local" sample size $n_0$ is equal to the whole sample size $n$.
Put differently, when a kernel function is non-compactly supported, the LPD estimator can be seen as the usual weighted least squares that uses the whole sample, and we can expect it to exhibit much stable behavior. 
Formally, we have the following result.

\begin{assF}\label{assumption kernel non-compact}
    The kernel function is bounded, positive everywhere, centered at zero, and monotonically decreasing from the peak.
\end{assF}
\begin{theorem}\label{theorem: finite variance}
    Assume $n\geq6$. Under Assumptions \ref{assumption for variance explosion}, \ref{assumption no tie}, and \ref{assumption kernel non-compact}, the LPD estimator $\hat{f}(x_R)$ of degree $p=1$ has a second moment.
\end{theorem}
This result also mirrors the finding of \citet[Theorem 2]{Seifert_Gasser:1996}.

Theorems \ref{theorem: infinite variance} and \ref{theorem: finite variance} suggest that the use of unbounded-support kernels, such as the Laplace or Gaussian kernel, is advisable. Note that, in contrast to local polynomial regression, these kernels remain a reasonable choice even asymptotically, as shown in Section \ref{sec: efficiency}.
This variance-stabilizing property of unbounded kernels is empirically relevant, particularly in discontinuity detection settings, as demonstrated in Section \ref{sec finite manip}.

\section{Practical Recommendations}\label{sec: recommendation}
Our message for empirical researchers is simple: \textit{kernel choice is an important design decision} for estimation and inference based on the popular LPD estimator, even though this conclusion contrasts with the conventional view in the kernel-smoothing literature that kernel selection is largely irrelevant to numerical performance.

Two sets of results motivate practical guidance. First, our finite-sample theory indicates that compactly supported kernels can be vulnerable to variance inflation, whereas unbounded-support kernels such as the Gaussian and Laplace are naturally guarded against this problem. Second, our large-sample analysis shows that, at boundary points, the Laplace kernel is particularly attractive in terms of MSE, CI length, and statistical power.

That said, the choice of kernel ultimately remains a matter of researcher preference and empirical priorities. Unlike bandwidth selection, for which the optimal bandwidth is defined and a well-developed automatic rule exists \citep{Cattaneo_etal:2020}, the infeasible optimal weighting scheme is not achieved by any kernel function. Accordingly, we recommend selecting the kernel to align with the main empirical goal, using the following rules of thumb.

\textbf{Boundary inference.}
When accurate boundary estimation is the primary objective---and especially when inference at the boundary is central---the Laplace kernel is a strong default choice. In our analysis, it delivers a favorable MSE and produces shorter intervals and higher power than commonly used compactly supported kernels.

\textbf{Applications requiring good performance both in the interior and at the boundary.}
In some empirical settings, researchers may be interested in both boundary behavior and interior fit. In such cases, the Laplace kernel may be less appealing because, at interior points, it can be substantially less MSE-efficient than compactly supported kernels such as the uniform or triangular kernels (see Table S1 in the Online Appendix). 
A pragmatic compromise may be the Gaussian kernel: it tends to improve boundary behavior relative to standard compact kernels while remaining reasonably competitive in the interior.
Furthermore, the Gaussian kernel can avoid the finite-sample variance explosion.

\textbf{Bias-sensitive applications.}
Finally, if bias is the primary concern, the Laplace kernel may be less attractive in some applications, as its MSE-efficiency gains can come at the cost of increased bias. In such cases, the Gaussian kernel may be a good alternative: it tends to have milder bias while still benefiting from unbounded support and is widely used in empirical work in standard settings.

\section{Concluding Remarks}\label{sec: conclusion}
Motivated by their usefulness and intuitive simplicity, recent econometrics and statistics literature extends local polynomial methods beyond the standard LPD setting to the estimation and inference of densities and related objects \citep{cattaneo2022conditional, cattaneo2021local}. These extensions have also attracted increasing attention in empirical work. We conjecture that similar considerations regarding kernel choice arise in these settings as well.

\bibliographystyle{apalike} 
\bibliography{refs}

\newpage
\setcounter{section}{0}
\setcounter{page}{1}
\renewcommand{\thepage}{S\arabic{page}}
\setcounter{equation}{0}
\renewcommand{\theequation}{S.\arabic{equation}}
\setcounter{lemmax}{0}
\renewcommand{\thelemmax}{S\arabic{lemmax}}
\setcounter{theoremx}{0}
\renewcommand{\thetheoremx}{S\arabic{theoremx}}
\setcounter{propx}{0}
\renewcommand{\thepropx}{S\arabic{propx}}
\setcounter{corx}{0}
\renewcommand{\thetheoremx}{S\arabic{corx}}
\setcounter{table}{0}
\renewcommand{\thesection}{S\arabic{section}}
\setcounter{table}{0}
\renewcommand{\thetable}{S\arabic{table}}
\setcounter{figure}{0}
\renewcommand{\thefigure}{S\arabic{figure}}
\setcounter{remarkx}{0}
\renewcommand{\theremarkx}{S\arabic{remarkx}}

\renewcommand{\theassumption}{S.\arabic{assumption}}


\makeatletter
\renewcommand*{\@fnsymbol}[1]{\ensuremath{\ifcase#1\or \flat\or * \else\@ctrerr\fi}}
\makeatother

\begin{center}
{\Large\bf Online Appendix for ``Kernel Choice Matters for Local Polynomial Density Estimators at Boundaries"}
\end{center}
\begin{center}
    {\large Shunsuke Imai \& Yuta Okamoto}
\end{center}

\begin{abstract}
    Section \ref{sec: oa proof} provides the proofs of our main results.
    Section \ref{sec: oa equiv} outlines the derivation of the equivalent kernel.
    Section \ref{sec: oa int} discusses the interior points.
    Section \ref{sec: oa fig} collects omitted figures.
\end{abstract}

\section{Proofs}\label{sec: oa proof}
\subsection{Proofs of Theorems}\label{appendix proof}
\begin{proof}[Proof of Theorem \ref{theorem: infinite variance}]
    Put $\mathbb{E}_{2}[\cdot]=\mathbb{E}[\cdot \mid n_0=2]$. 
    Let $X_{(1)},X_{(2)},\dots, X_{(n-1)}, X_{(n)}$ be ordered observations. 
    We have
    \begin{align}
        \mathbb{E}_{2}\left[\hat{f}(x_R)^2\right]
        &=
        \mathbb{E}_{2}\left[\left(\frac{\hat{F}(X_{(n)}) -\hat{F}(X_{(n-1)})}{(X_{(n)} - x_R) - (X_{(n-1)} - x_R)}\right)^2\right],\label{fit line LPD}
    \end{align}
    where we utilize the fact that the coefficient can be considered as the ``slope" of the straight line through two data points. Note that this holds regardless of the kernel function as long as Assumption \ref{assumption kernel compact} is satisfied.
    Then, noticing that $\hat{F}(X_{(n)}) = 1$ and $\hat{F}(X_{(n-1)})= 1 - 1/n$, we have
    \begin{align}
        \mathbb{E}_{2}\left[\hat{f}(x_R)^2\right] &=
        \mathbb{E}_{2}\left[\frac{\left\{1 - \left(1-1/n\right)\right\}^2}{\left\{X_{(n)} - X_{(n-1)}\right\}^2}\right]
        \geq
        \mathbb{E}_{2}\left[\frac{\left(1/n\right)^2}{\left\{x_R - X_{(n-1)}\right\}^2 }\right].\label{similar form LPD}
    \end{align}
    Now, it follows from the definition of the conditional expectation that
    \begin{align}
        \mathbb{E}_{2}&\left[\frac{\left(1/n\right)^2}{\left\{x_R - X_{(n-1)}\right\}^2 }\right]=
        \frac{1}{n^2}\times\frac{1}{P_1}
        \int_{\mathcal{S}} \frac{1}{(x_R- z)^2}\tilde{\mathbb{P}}(z)\,dz, \label{eq conditional ex2}
    \end{align}
    where $P_1=\mathbb{P}[n_0=2]$ and $ \tilde{\mathbb{P}}(z)$ is defined by
    \begin{align*}
        \tilde{\mathbb{P}}(z)\, dz& =
        \mathbb{P}\left[n_0=2\,\mathrm{and}\, X_{(n-1)} \in [z-dz/2, z+dz/2]\right].
    \end{align*}
    The joint density of all $n$ ordered statistics is given by $n! f(x_1)f(x_2)\cdots f(x_n)$ with $x_1 < x_2 < \dots < x_n$ \cite[p.12]{david2004order}. Hence, we have, with $\tilde{f}(z)=f_X(z)1\{x_R-h \le z\}$, that
    \begin{align*}
        \tilde{\mathbb{P}}(z)\, dz&= 
        dz\int_{z}^{x_R} \int_{x_L}^{x_R-h}\int_{x_L}^{x_{n-2}}\dots \int_{x_L}^{x_3} \int_{x_L}^{x_2} n!f(x_1)\cdots f(x_{n-2})\tilde{f}(z)f(x_n) \, dx_1\dots dx_{n-2}dx_n\\
        & \geq dz\cdot n!\delta^{n-1}\tilde{f}(z)\int_{z}^{x_R} \int_{x_L}^{x_R-h}\int_{x_L}^{x_{n-2}}\dots \int_{x_L}^{x_3} \int_{x_L}^{x_2} 1 \,dx_1\dots dx_{n-2}dx_n\\
        & = dz\cdot n!\delta^{n-1}\tilde{f}(z) C(n,h,x_L,x_R) \int_{z}^{x_R} 1 \,dx_n \\
        & = dz\cdot n!\delta^{n-1}\tilde{f}(z) C(n,h,x_L,x_R)\cdot (x_R-z),
    \end{align*}
    where the inequality follows from Assumption \ref{assumption for variance explosion}, and $C(n,h,x_L,x_R)\in(0,\infty)$ is some constant depending on $n$, $h$, $x_L$, and $x_R$.
    This implies that 
    \begin{align*}
        \frac{1}{P_1 n^2} \int_{\mathcal{S}} \frac{1}{(x_R- z)^2} \tilde{\mathbb{P}}(z)\, dz
        &\ge \frac{n!}{P_1n^2} \delta^{n-1}C(n,h,x_L,x_R) \int_{x_L}^{x_R} \frac{1}{(x_R-z)} \tilde{f}_X(z)\, dz \\
        & = \frac{n!}{P_1n^2} \delta^{n-1}C(n,h,x_L,x_R)  \int_{x_R-h}^{x_R} \frac{1}{(x_R-z)} f_X(z)\, dz\\
        & \ge \frac{n!}{P_1n^2} \delta^{n}C(n,h,x_L,x_R)  \int_{x_R-h}^{x_R} \frac{1}{(x_R-z)}\, dz,
    \end{align*}
    where we let $\int_{x_L}^{x_R}$ denotes $\lim_{x\rightarrow x_R} \int_{x_L}^{x}$.
    Then, combined with \eqref{similar form LPD} and \eqref{eq conditional ex2}, we obtain that
    \begin{align*}
        \mathbb{E}\left[\hat{f}(x_R)^{2}\right]
        \geq
        \mathbb{E}_{2}\left[\hat{f}(x_R)^{2}\right]
        \mathbb{P}\left[n_0=2\,\right]\geq
        \frac{n! \delta^n C(n,h,x_L,x_R)}{n^2} \int_{x_R-h}^{x_R}\frac{1}{x_R - z} \,dz
        =\infty,
    \end{align*}
    which completes the proof.
\end{proof}

\begin{proof}[Proof of Theorem \ref{theorem: finite variance}]
For notational simplicity, we write $u_{i,h} \coloneqq ({X_i-x})/{h}$, $K_{i,h} \coloneqq K( u_{i,h} )$, $\tilde{\bm{S}}_{(p,h,x)} \coloneqq ({nh})^{-1} \sum_{i=1}^n \bm{r}_p(u_{i,h})\bm{r}_p(u_{i,h})^\prime K_{i,h}$, $\bm{\tilde{\Omega}}_{(p,h,K)} \coloneqq {h}^{-1} [K_{1,h}\bm{r}_p(u_{1,h}), \ldots,  K_{n,h}\bm{r}_p(u_{n,h})]$, and $\bm{\hat{F}} \coloneqq (\hat{F}(X_1), \dots, \hat{F}(X_n))^\prime$.
With no loss of generality, we can set $x_R=0$ and $x_L= -L$ with $L>0$. 
The LPD estimator with polynomial order $p=1$ at the boundary point $x_R=0$ is given by $\hat{f}(0)  = ({nh})^{-1}\bm{e}_1^\prime \tilde{\bm{S}}_{(1,h,0)}^{-1} \bm{\tilde{\Omega}}_{(1,h,0)} \bm{\hat{F}}$.
To prove the theorem, we shall bound the second moment of this estimator,
\begin{align*}
        \mathbb{E}\left[\hat{f}(0)^2\right] = 
        \mathbb{E}\left[  \left( \frac{1}{nh}\bm{e}_1^\prime \tilde{\bm{S}}_{(1,h,0)}^{-1} \bm{\tilde{\Omega}}_{(1,h,0)} \bm{\hat{F}} \right)^2 \right].
\end{align*}
First, we rewrite $\tilde{\bm{S}}_{(1,h,0)}^{-1}$ as $\tilde{\bm{S}}_{(1,h,0)}^{-1} = |\tilde{\bm{S}}_{(1,h,0)}|^{-1} \bm{\tilde{M}}_{(1,h,0)}$ with
\begin{align*}
    |\tilde{\bm{S}}_{(1,h,0)}| \coloneqq \det(\tilde{\bm{S}}_{(1,h,0)}) , \quad \bm{\tilde{M}}_{(1,h,0)} \coloneqq 
    \renewcommand{\arraystretch}{2}
    \begin{bmatrix}
        \displaystyle \frac{1}{nh}\sum_{i=1}^n u_{i,h}^2 K_{i,h} & \displaystyle -\frac{1}{nh}\sum_{i=1}^n u_{i,h} K_{i,h} \\
        \displaystyle -\frac{1}{nh}\sum_{i=1}^n u_{i,h} K_{i,h} & \displaystyle \frac{1}{nh}\sum_{i=1}^n K_{i,h}
    \end{bmatrix}. 
\end{align*}
Using this notation, the second moment can be represented as
\begin{align*}
    \mathbb{E}\left[\hat{f}(0)^2\right] = \mathbb{E}\left[  \left( \frac{1}{nh}\bm{e}_1^\prime |\tilde{\bm{S}}_{(1,h,0)}|^{-1} \tilde{\bm{M}}_{(1,h,0)}\bm{\tilde{\Omega}}_{(1,h,0)} \bm{\hat{F}} \right)^2 \right].
\end{align*}
Next, we bound the inverse of the determinant and the "numerator" part separately.
The determinant $|\tilde{\bm{S}}_{(1,h,0)}|$ is bounded as
\begin{align*}
    |\tilde{\bm{S}}_{(1,h,0)}| &= \left( \frac{1}{nh}\sum_{i=1}^n K_{i,h} \right)\left( \frac{1}{nh}\sum_{i=1}^n u_{i,h}^2 K_{i,h} \right) - \left( \frac{1}{nh}\sum_{i=1}^n u_{i,h} K_{i,h} \right)^2 \\
        &= \frac{1}{n^2h^2} \sum_{i<j} K_{i,h} K_{j,h} (u_{i,h} - u_{j,h})^2 \\
        &= \frac{1}{n^2h^4}\sum_{i<j} K_{i,h} K_{j,h} (X_i - X_j)^2 \ge \frac{K^2(-L/h)}{n^2h^4}  (X_{(1)} - X_{(n)})^2.
\end{align*}
In addition, we can bound the "numerator" component as
\begin{align*}
        & \frac{1}{nh}\bm{e}_1^\prime \tilde{\bm{M}}_{(1,h,0)} \bm{\tilde{\Omega}}_{(1,h,0)} \bm{\hat{F}} \\
        &= \left( \frac{1}{nh^2}\sum_{i=1}^n u_{i,h}K_{i,h}\hat{F}(X_i) \right)\left( \frac{1}{nh}\sum_{i=1}^n K_{i,h} \right) - \left( \frac{1}{nh^2}\sum_{i=1}^n K_{i,h}\hat{F}(X_i) \right)\left( \frac{1}{nh}\sum_{i=1}^n u_{i,h}K_{i,h} \right) \\
        & < \left( \frac{1}{nh^2}\sum_{i=1}^n K_{i,h}\hat{F}(X_i) \right)\left( \frac{1}{nh^2}\sum_{i=1}^n L K_{i,h} \right) <  \frac{L}{h^4} K^2(0).
    \end{align*}
where the inequality follows from $-L/h \le u_{i,h} \le 0$ and the non-negativity of $L$, $h$, $K_{i,h}$ and $\hat{F}(X_i)$, and the final inequality follows from $K_{i,h} \le K(0)$ and $\hat{F}(X_i) \le 1$.
Similarly, we can show that $(nh)^{-1}\bm{e}_1^\prime \tilde{\bm{M}}_{(1,h,0)} \bm{\tilde{\Omega}}_{(1,h,0)} \bm{\hat{F}}  > -{L}/{h^4}K^2(0)$.
Therefore, we obtain that
    \begin{align*}
        \mathbb{E}\left[ \hat{f}^2(0) \right] < \left(\frac{n^2LK^2(0)}{K^2(-L/h)}\right)^2 \mathbb{E}\left[ \frac{1}{(X_{(1)} - X_{(n)})^{4}} \right].
    \end{align*}
Using the same notations as those in the proof of Theorem \ref{theorem: infinite variance}, we can evaluate the expectation as
    \begin{align*}
    &\mathbb{E}\left[ \frac{1}{(X_{(1)} - X_{(n)})^{4}} \right]\\
    &= \int_{-L}^{0}\int_{-L}^{x_{n}}\int_{-L}^{x_{n-1}}\cdots\int_{-L}^{x_{3}}\int_{-L}^{x_{2}} \frac{1}{(x_{1} - x_{n})^4}n! f(x_1)\cdots f(x_n)\, dx_1 dx_2 \ldots dx_{n-1} dx_n\\
    &< n! \Delta^{n} \underbrace{\int_{-L}^{0}\int_{-L}^{x_{n}}\int_{-L}^{x_{n-1}}\cdots\int_{-L}^{x_{3}}\int_{-L}^{x_{2}}  \frac{1}{(x_{1} - x_{n})^4}  \, dx_1 dx_2 \ldots dx_{n-1} dx_n}_{\eqqcolon I_n}.
\end{align*}
We can rewrite $I_n$ by
\begin{align*}
    I_n = \int_{-L}^{0}\int_{-L}^{0}\cdots\int_{-L}^{0}\int_{-L}^{x_{n}} \frac{\mathbf{1}\left\{ x_1\leq x_2\leq\cdots\leq x_{n-1}\leq x_n\right\}}{(x_{1} - x_{n})^4}  \, dx_1 dx_2 \ldots dx_{n-1} dx_n.
\end{align*}
By Fubini's theorem, we can exchange the order of integrations to obtain that
\begin{align*}
    I_n = \int_{-L}^{0}\int_{-L}^{x_{n}}\frac{1}{(n-2)!} (x_n - x_1)^{n-6}\,dx_1 dx_n,
\end{align*}
where we used, for fixed $x_1,x_n\in[-L,0]$ such that $x_1 \leq x_n$,
\begin{align*}
    &\int_{-L}^{0}\cdots\int_{-L}^{0} {\mathbf{1}\left\{ x_1\leq x_2\leq\cdots\leq x_{n-1}\leq x_n\right\}}  \, dx_2 \ldots dx_{n-1}\\
    &=(x_n - x_1)^{n-2} \int_0^1\cdots\int_0^1 \mathbf{1}\left\{ 0\leq y_2\leq\cdots\leq y_{n-1}\leq 1\right\}\, dy_2 \ldots dy_{n-1}\\
    &=(x_n - x_1)^{n-2}\times \frac{1}{(n-2)!}
\end{align*}
and the last equality can be deduced by considering the probability of the $(n-2)$ independent uniformly distributed random variables satisfying the specific order.
Then, we obtain that
\begin{align*}
    I_n = \int_{-L}^{0}\int_{-L}^{x_{n}}\frac{1}{(n-2)!} (x_n - x_1)^{n-6}\,dx_1 dx_n
    =\frac{1}{(n-2)! (n-5) (n-4)} L^{n-4},
\end{align*}
which is bounded if $n\geq6$.
Hence, we can bound the expectation part, which proves the statement.
\end{proof}

\subsection{Corollary}\label{subsec: cor}
\begin{corollary}\label{corl}
    Under the same assumptions for Theorem 1, the local polynomial density estimator $\hat{f}(x_R)$ of degree $p=2$ does not have a second moment.
\end{corollary}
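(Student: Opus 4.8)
The plan is to mirror the proof of Theorem \ref{theorem: infinite variance}, replacing the conditioning event $\{n_0=2\}$ by $\{n_0=3\}$, since for $p=2$ the local polynomial has three coefficients and exact (kernel-independent) interpolation occurs precisely when the local sample size equals the number of parameters. Concretely, I would again use $\E{\hat{f}(x_R)^2}\geq \E{\hat{f}(x_R)^2\mid n_0=3}\,\P{n_0=3}$ with $\P{n_0=3}>0$, and work on the event that exactly the three largest order statistics $X_{(n-2)}<X_{(n-1)}<X_{(n)}$ fall in $[x_R-h,x_R)$ while the remaining $n-3$ observations lie in $[x_L,x_R-h)$. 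By Assumption \ref{assumption no tie} the three abscissae are distinct, so the weighted normal equations have the interpolating quadratic as their unique solution, exactly as in the $p=1$ representation.

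On $\{n_0=3\}$ the fitted quadratic interpolates the points $(u_j,\hat{F}(X_{(n-3+j)}))$ with $u_j=X_{(n-3+j)}-x_R$ and empirical-CDF values $1-2/n,\,1-1/n,\,1$. Since $\hat{f}(x_R)=e_1^\prime\hat{\beta}$ is the coefficient of the linear term, it equals the derivative at $u=0$ of the interpolant; writing this in Lagrange form and using $\sum_j L_j^\prime(0)=0$ (the derivative of the constant interpolant vanishes) collapses the expression to a combination of the two increments $1/n$ only, giving
\begin{align*}
\hat{f}(x_R)=\frac{1}{n(u_2-u_1)}\left[\frac{2(u_2+u_3)}{u_3-u_1}-\frac{u_1+u_3}{u_3-u_2}\right].
\end{align*}
This is the finite-sample analogue of (\ref{similar form LPD}): a small gap between two adjacent order statistics forces $\hat{f}(x_R)$ to blow up.

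To extract a divergent contribution I would restrict to a favorable sub-event $A\subseteq\{n_0=3\}$ on which signs and magnitudes are controlled: take $X_{(n-2)}$ bounded away from the boundary (e.g.\ $u_1\leq-3h/4$) and $X_{(n-1)},X_{(n)}$ both near it (e.g.\ $u_2,u_3\in[-h/4,0)$) with a small rightmost gap $u_3-u_2=X_{(n)}-X_{(n-1)}$. On $A$ every denominator except $u_3-u_2$ is bounded below by a constant multiple of $h$, the first bracket term is $O(1)$, and $-(u_1+u_3)/(u_3-u_2)$ is positive and dominant, yielding a one-sided bound $\hat{f}(x_R)\geq c/\{n(X_{(n)}-X_{(n-1)})\}$ for an explicit $c>0$. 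Hence $\hat{f}(x_R)^2\geq c^2/\{n^2(X_{(n)}-X_{(n-1)})^2\}$ on $A$, and integrating against the order-statistic joint density $n!\prod f$, bounded below via Assumption \ref{assumption for variance explosion}, leaves the inner integral $\int_{X_{(n-1)}}^{\,\cdot}(X_{(n)}-X_{(n-1)})^{-2}f(X_{(n)})\,dX_{(n)}$, which diverges at $X_{(n)}=X_{(n-1)}$ just as the $1/(x_R-z)$ integral does in Theorem \ref{theorem: infinite variance}. This gives $\E{\hat{f}(x_R)^2}=\infty$.

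The main obstacle is the algebra of the three-point case: unlike the single slope in Theorem \ref{theorem: infinite variance}, the interpolant's derivative is a sum of terms of opposite sign, so I must choose $A$ carefully so that exactly one term diverges while the others stay bounded and the dominant term keeps a definite sign. This sign/magnitude bookkeeping, rather than any new probabilistic idea, is the algebraic effort the remark alludes to; once $A$ is fixed, the divergence of the resulting single integral is identical in spirit to the $p=1$ argument, confirming that the small degree of freedom relative to $p$ is again the root cause.
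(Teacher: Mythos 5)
Your proposal is correct and takes essentially the same route as the paper's own proof: condition on $\{n_0=3\}$, write $\hat{f}(x_R)$ as the derivative at the boundary of the quadratic interpolating the top three order statistics (your Lagrange-form expression is an algebraic rearrangement of the paper's closed-form formula), restrict to a sub-event where $X_{(n-2)}$ stays in the far part of the window while $X_{(n-1)},X_{(n)}$ lie near $x_R$ so that one positive term dominates and the rest stay bounded, and finish with a divergent order-statistic integral lower-bounded via $f\geq\delta$. The only immaterial differences are your use of quarters instead of the paper's thirds for the sub-event, and that you integrate the non-integrable singularity $\left(X_{(n)}-X_{(n-1)}\right)^{-2}$ directly over $X_{(n)}$, whereas the paper relaxes the gap to $x_R-X_{(n-1)}$ and obtains a logarithmic divergence in $X_{(n-1)}$, exactly as in its $p=1$ argument.
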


\begin{proof}[Proof of Corollary \ref{corl}]
    We write $\mathbb{E}_{3,\circ}[\cdot]=\mathbb{E}[\cdot  \mid n_0=3,\circ]$.
    Based on a similar idea to Theorem 1, supposing a quadratic polynomial regression with three data points, a straightforward calculation yields
    \begin{align*}
        \mathbb{E}_{3}&\left[\hat{f}(x_R)^2\right]\\
        &=
        \mathbb{E}_{3}\left[\left(\frac{X_{(n)}^2 - 2X_{(n-1)}^2 + X_{(n-2)}^2 - 2x_R\left(X_{(n)} - 2 X_{(n-1)} + X_{(n-2)}\right)}{n\left(X_{(n)}-X_{(n-1)}\right)\left(X_{(n)}-X_{(n-2)}\right)\left(X_{(n-1)}-X_{(n-2)}\right)}\right)^2\right]\\
        &=
        \mathbb{E}_{3}\left[\left(
        \frac{2x_R - \left(X_{(n-1)} + X_{(n-2)}\right)}{n\left(X_{(n)} - X_{(n-2)}\right)\left(X_{(n)} - X_{(n-1)}\right)} - 
        \frac{2x_R - \left(X_{(n)} + X_{(n-1)}\right)}{n\left(X_{(n)} - X_{(n-2)}\right)\left(X_{(n-1)} - X_{(n-2)}\right)}
        \right)^2\right].
    \end{align*}
    Now let $\mathcal{E}$ denote the event $\{X_{(n-2)}\in[x_R-h, x_R-2h/3)\,\mathrm{and}\,X_{(n-1)},X_{(n)}\in[x_R-h/3,x_R)\}$. 
    Note that $\mathbb{P}[n_0=3,\mathcal{E}]>0$, and under $\mathcal{E}$, the first term in the bracket is larger than the second and both are positive.
    Then we have
    \begin{align*}
        \mathbb{E}_{3,\mathcal{E}}&\left[\left(
        \frac{2x_R - \left(X_{(n-1)} + X_{(n-2)}\right)}{n\left(X_{(n)} - X_{(n-2)}\right)\left(X_{(n)} - X_{(n-1)}\right)} - 
        \frac{2x_R - \left(X_{(n)} + X_{(n-1)}\right)}{n\left(X_{(n)} - X_{(n-2)}\right)\left(X_{(n-1)} - X_{(n-2)}\right)}
        \right)^2\right]\\
        \geq&
        \mathbb{E}_{3,\mathcal{E}}\left[\left(
        \frac{2x_R - \left(X_{(n-1)} + X_{(n-2)}\right)}{nh\left(X_{(n)} - X_{(n-1)}\right)} - 
        \frac{2x_R - \left(X_{(n)} + X_{(n-1)}\right)}{nh\left(X_{(n-1)} - X_{(n-2)}\right)}
        \right)^2\right]\\
        \geq&
        \mathbb{E}_{3,\mathcal{E}}\left[\left(
        \frac{x_R - X_{(n-2)}}{nh\left(X_{(n)} - X_{(n-1)}\right)} - 
        \frac{x_R - X_{(n)}}{nh\left(X_{(n-1)} - X_{(n-2)}\right)}
        \right)^2\right]\\
        \geq&
        \mathbb{E}_{3,\mathcal{E}}\left[\left(
        \frac{x_R - X_{(n-2)}}{nh\left(x_R - X_{(n-1)}\right)} - 
        \frac{h/3}{nh\left(h/3\right)}
        \right)^2\right]\\
        \geq&
        \mathbb{E}_{3,\mathcal{E}}\left[\left(
        \frac{h/3}{nh\left(x_R - X_{(n-1)}\right)} - 
        \frac{1}{nh}
        \right)^2\right].
    \end{align*}
    The remainder is the same as the proof of Theorem 1. 
\end{proof}

\subsection{Local Linear Regression}\label{subsec: local linear}
We here explain the relation between the local polynomial regression and the LPD. Write the regression function by $r(x)$, its local linear estimator by $\hat{r}(x)$, and the density of the regressor by $f(x)$. Assume for simplicity that the error satisfies $\V{\varepsilon_i}=\sigma^2$, and $[x_L,x_R)=[0,1)$. We write some bounded constants by $C_j$ below.

Utilizing the homoskedasticity assumption and the law of total variance repeatedly, \cite{Seifert_Gasser:1996} showed that 
\begin{align*}
    \V{\hat{r}(x_R)}\geq C_1 \times 
    \mathbb{V}_{\mathrm{U},r\equiv0,f\equiv1}\left[\hat{r}(x)|n_0=2\right],
\end{align*}
where $\mathbb{V}_{\mathrm{U},r\equiv0,f\equiv1}$ is the variance when using the uniform kernel, $r(x)\equiv0$, and $f(x)\equiv1$.
Although \cite{Seifert_Gasser:1996} immediately concludes that the last term equals infinity, an important point would be in the omitted part, as it will show what happens and clarify what the connection is. Now, note that we can write
\begin{align}
    &\mathbb{V}_{\mathrm{U},r\equiv0,f\equiv1}\left[\hat{r}(x_R) \mid n_0=2\right]
    = 
    \mathbb{E}_{2,r\equiv0,f\equiv1}\left[\left(\frac{Y_{(n)}-Y_{(n-1)}}{X_{(n)}-X_{(n-1)}}\right)^2\right] - C_2\label{fit line LLR}\\
    &\quad=
    \mathbb{E}_{2,r\equiv0,f\equiv1}\left[\left(\frac{\varepsilon_{(n)}-\varepsilon_{(n-1)}}{X_{(n)}-X_{(n-1)}}\right)^2\right] - C_2
    =
    \mathbb{E}_{2,r\equiv0,f\equiv1}\left[\frac{2\sigma^2}{\left(X_{(n)}-X_{(n-1)}\right)^2}\right] - C_2,\label{similar form LLR}
\end{align}
where $(Y_{(n)},\varepsilon_{(n)})$ is $(Y,\varepsilon)$ that corresponds to $X_{(n)}$.
We can show that this is infinite by following similar steps as we take in our proof since (\ref{similar form LLR}) is essentially the same as (\ref{similar form LPD}).
Now we can see that the LPD and the local linear regression share a similar structure through (\ref{similar form LPD}) and (\ref{similar form LLR}), which are a direct consequence of (\ref{fit line LPD}) and (\ref{fit line LLR}). 
This fact suggests that the fundamental factor for the variance property is the same: Roughly speaking, when the local sample size is small relative to the polynomial degree, the local polynomial overfits the data, and in such a case, the estimate or the slope can take an arbitrarily large value with a certain probability, since the data point can be located close enough.

\subsection{Proofs of Lemmas}
\begin{proof}[Proof of Lemma \ref{lemma:gaussian approximation}]
We derive a bound on the normal approximation error of $T_3$ under $\mathbb{H}_{1,\mathtt{LA}}$ and then show the convergence of the bound. 
For each $n$, probabilities are taken under $\mathbb{P}_n$.
Convergence statements are considered for the sequence $\{\mathbb{P}_n\}_{n\in\mathbb{N}_+}$ as $n\to \infty$.
Since Assumption \ref{as:DGP-local} imposes the smoothness and boundedness conditions on the DGP uniformly in $n$, under Assumptions \ref{as:DGP-local} and \ref{as:kernel}, we can use the same convergence rates of the remainder terms, as given in Lemma 7-10 of the supplemental appendix for \cite{Cattaneo_etal:2020}. 
Similarly, from Theorem 2 in the supplemental appendix of \cite{Cattaneo_etal:2020}, we can see that the variance is consistent under Assumption \ref{as:DGP-local} and \ref{as:kernel}.
Therefore, it holds that
\begin{align*}
    T_3 &= \frac{\tau_n + \bm{e}_1^\top \left[ \bm{H}_+^{-1}\left(f_{n}(0_+)\bm{S}_{(3,0_+)}\right)^{-1} \hat{\mathbf{L}}_{n,+} - \bm{H}^{-1}_-\left(f_{n}(0_-)\bm{S}_{(3,0_-)}\right)^{-1} \hat{\mathbf{L}}_{n,-} \right] + r_{n} }{\sqrt{\frac{1}{nh_{2+}^{\mathtt{MSE}}} \mathscr{V}_{n,+} + \frac{1}{nh_{2-}^{\mathtt{MSE}}} \mathscr{V}_{n,-}}},
\end{align*}
where $r_{n}$ is a $n$-dependent sequence satisfying $ (\frac{1}{nh_{2+}^{\mathtt{MSE}}} \mathscr{V}_{n,+} + \frac{1}{nh_{2-}^{\mathtt{MSE}}} \mathscr{V}_{n,-})^{-1/2} |r_{n}| \to 0$ as $n \to \infty$,
\begin{align*}
    \bm{S}_{(3,0_+)}&\coloneqq \int_0^\infty \bm{r}_3(u) \bm{r}_3(u)^\top K(u)\,du,\quad 
    \bm{H}_+ \coloneqq \mathrm{diag}\left[1, h_{2+}^{\mathtt{MSE}}, \dots, (h_{2+}^{\mathtt{MSE}})^p\right],\\
    \hat{\mathbf{L}}_{n,+}&\coloneqq \int_0^{s_r/h_{2+}^{\mathtt{MSE}}} \bm r_3\left( u\right) K\left( u\right)\{\hat{F}_{n}(uh_{2+}^{\mathtt{MSE}}) - F_{n}(uh_{2+}^{\mathtt{MSE}})\}f_{n}(uh_{2+}^{\mathtt{MSE}}) \,du,
\end{align*}
and $\bm{S}_{(3,0_-)}$ and $\hat{\mathbf{L}}_{n,-}$ are defined similarly.
Observe that the leading term is the sample mean of i.i.d.~random variables:
\begin{align*}
    \hat{L}_{n} &\coloneqq \bm{e}_1^\top \left[ \bm{H}_+^{-1}\left(f_{n}(0_+)\bm{S}_{(3,0_+)}\right)^{-1} \hat{\mathbf{L}}_{n,+} - \bm{H}^{-1}_-\left(f_{n}(0_-)\bm{S}_{(3,0_-)}\right)^{-1} \hat{\mathbf{L}}_{n,-} \right] \\
    & = \frac{1}{h_{2+}^{\mathtt{MSE}}}\int_0^{s_r/h_{2+}^{\mathtt{MSE}}} \frac{1}{f_{n}(0_+)}\bm{e}_1^\top\bm{S}^{-1}_{(3,0_+)} \bm r_3\left( u\right) K\left( u\right)\{\hat{F}_{n}(uh_{2+}^{\mathtt{MSE}}) - F_{n}(uh_{2+}^{\mathtt{MSE}})\}f_{n}(uh_{2+}^{\mathtt{MSE}}) \,du \\
    & \quad - \frac{1}{h_{2-}^{\mathtt{MSE}}}\int^0_{-s_\ell/h_{2-}^{\mathtt{MSE}}} \frac{1}{f_{n}(0_-)}\bm{e}_1^\top\bm{S}^{-1}_{(3,0_-)} \bm r_3\left( u\right) K\left( u\right)\{\hat{F}_{n}(uh_{2-}^{\mathtt{MSE}}) - F_{n}(uh_{2-}^{\mathtt{MSE}})\}f_{n}(uh_{2-}^{\mathtt{MSE}}) \,du \\
    & = \frac{1}{nh_{2+}^{\mathtt{MSE}}}\sum_{i=1}^n \int_0^{s_r/h_{2+}^{\mathtt{MSE}}} \frac{1}{f_{n}(0_+)}\bm{e}_1^\top\bm{S}^{-1}_{(3,0_+)} \bm r_3\left( u\right) K\left( u\right)\{\bm{1}[X_i\le uh_{2+}^{\mathtt{MSE}}] - F_{n}(uh_{2+}^{\mathtt{MSE}})\}f_{n}(uh_{2+}^{\mathtt{MSE}}) \,du \\
    & \quad -  \frac{1}{nh_{2-}^{\mathtt{MSE}}}\sum_{i=1}^n \int^0_{-s_\ell/h_{2-}^{\mathtt{MSE}}} \frac{1}{f_{n}(0_-)}\bm{e}_1^\top\bm{S}^{-1}_{(3,0_-)} \bm r_3\left( u\right) K\left( u\right)\{\bm{1}[X_i\le uh_{2-}^{\mathtt{MSE}}] - F_{n}(uh_{2-}^{\mathtt{MSE}})\}f_{n}(uh_{2-}^{\mathtt{MSE}}) \,du.
\end{align*}
Let $\hat{Z}_{n} \coloneqq (\frac{1}{nh_{2+}^{\mathtt{MSE}}} \mathscr{V}_{n,+} + \frac{1}{nh_{2-}^{\mathtt{MSE}}} \mathscr{V}_{n,-})^{-1/2} \hat{L}_{n}$.
Then, the Berry-Esseen theorem for the sample mean of i.i.d. random variables implies that
\begin{align*}
    \sup_{z\in\mathbb{R}} \left| \mathbb{P}_{n}\left(\hat{Z}_{n} \le z\right) - \Phi(z) \right| & \le C_{\text{BE}} \frac{\mathbb{E}_{n}[|Y_{i,n}|^3]}{\sqrt{n} \{\mathbb{V}_{n}[Y_{i,n}]\}^{3/2}},
\end{align*}
where $C_{\text{BE}}$ is a constant independent of $n$, and $Y_{i,n}$ is defined as
\begin{align*}
    Y_{i,n} & \coloneqq \frac{1}{h_{2+}^{\mathtt{MSE}}}\int_0^{s_r/h_{2+}^{\mathtt{MSE}}} \frac{1}{f_{n}(0_+)}\bm{e}_1^\top\bm{S}^{-1}_{(3,0_+)} \bm r_3\left( u\right) K\left( u\right)\{\bm{1}[X_i\le uh_{2+}^{\mathtt{MSE}}] - F_{n}(uh_{2+}^{\mathtt{MSE}})\}f_{n}(uh_{2+}^{\mathtt{MSE}}) \,du \\
    & \quad - \frac{1}{h_{2-}^{\mathtt{MSE}}}\int^0_{-s_\ell/h_{2-}^{\mathtt{MSE}}} \frac{1}{f_{n}(0_-)}\bm{e}_1^\top\bm{S}^{-1}_{(3,0_-)} \bm r_3\left( u\right) K\left( u\right)\{\bm{1}[X_i\le uh_{2-}^{\mathtt{MSE}}] - F_{n}(uh_{2-}^{\mathtt{MSE}})\}f_{n}(uh_{2-}^{\mathtt{MSE}}) \,du.
\end{align*}
Here, observe that
\begin{align*}
    & \sup_{z\in\mathbb{R}} \left| \mathbb{P}_{n}\left(T_3 - \tilde{\tau}_n  \le z\right) - \Phi(z) \right| = \sup_{z\in\mathbb{R}} \left| \mathbb{P}_{n}\left(T_3   \le z \right) - \Phi(z -  \tilde{\tau}_n) \right|,
\end{align*}
and that
\begin{align*}
     \sup_{z\in\mathbb{R}} \left| \mathbb{P}_{n}\left(T_3 - \tilde{\tau}_n  \le z\right) - \Phi(z) \right| \le  \sup_{z\in\mathbb{R}} \left| \mathbb{P}_{n}\left(\hat{Z}_{n} \le z\right) - \Phi(z) \right|  + o(1).
\end{align*}
Then, we obtain that
\begin{align*}
    \sup_{z\in\mathbb{R}}  \left| \mathbb{P}_{n}(T_3 \le z) - \Phi(z -  \tilde{\tau}_n) \right| \lesssim  \frac{\mathbb{E}_{n}[|Y_{i,n}|^3]}{\sqrt{n} \{\mathbb{V}_{n}[Y_{i,n}]\}^{3/2}}.
\end{align*}
Hence, it suffices to show that this bound converges to $0$ as $n\to \infty$.
A similar evaluation to Lemma 9 in the supplemental appendix of \cite{Cattaneo_etal:2020} shows that $\{\mathbb{V}_{n}[Y_{i,n}]\}^{3/2} \asymp  \{({h_{2+}^{\mathtt{MSE}}})^{-1} + ({h_{2-}^{\mathtt{MSE}}})^{-1}\}^{3/2}$.
Also, since $\sup_{t\in\mathbb{R}}|\bm{1}[X_i\le t] - F_{n}(t)\}| \le 1$, it holds that
\begin{align*}
    |Y_{i,n}| 
    & \le \left(\frac{1}{{h_{2+}^{\mathtt{MSE}}}} \|\bm{e}_1^\top \bm{S}^{-1}_{(3,0_+)} \| \int_0^{s_r/h_{2+}^{\mathtt{MSE}}} \bm \|\bm r_3\left( u\right)\||K(u)|\,du \right)\left( \sup_{x\in\mathcal{X}}|f_{n}(x)|   \right) \\
    & \quad + \left(\frac{1}{{h_{2-}^{\mathtt{MSE}}}} \|\bm{e}_1^\top \bm{S}^{-1}_{(3,0_-)} \| \int^0_{-s_\ell/h_{2-}^{\mathtt{MSE}}} \bm \|\bm r_3\left( u\right)\||K(u)|\,du \right)\left( \sup_{x\in\mathcal{X}}|f_{n}(x)|   \right),
\end{align*}
where $\|\cdot\|$ is the Euclidean norm.
Therefore, under Assumptions \ref{as:DGP-local} and \ref{as:kernel}, we have $|Y_{i,n}|^3 \lesssim  \{({h_{2+}^{\mathtt{MSE}}})^{-1} + ({h_{2-}^{\mathtt{MSE}}})^{-1}\}^3$.
Since $h_{2+}^{\mathtt{MSE}}   \asymp n^{-1/5}$ and $h_{2-}^{\mathtt{MSE}}   \asymp n^{-1/5}$, we conclude that
\begin{align*}
     \frac{\mathbb{E}_{n}[|Y_{i,n}|^3]}{\sqrt{n} \{\mathbb{V}_{n}[Y_{i,n}]\}^{3/2}} \lesssim \frac{ \{({h_{2+}^{\mathtt{MSE}}})^{-1} + ({h_{2-}^{\mathtt{MSE}}})^{-1}\}^{3} }{\sqrt{n} \{({h_{2+}^{\mathtt{MSE}}})^{-1} + ({h_{2-}^{\mathtt{MSE}}})^{-1}\}^{3/2}} \asymp n^{-1/2 } n^{3/10}  \to 0.
\end{align*}
Summing up, $\sup_{z\in\mathbb{R}} \left| \mathbb{P}_{n}(T_3 \le z) - \Phi(z -  \tilde{\tau}_n) \right| \to 0$.
\end{proof}

\begin{proof}[Proof of Lemma \ref{lemma: equiv kernel}]
    The first statement follows by $K^*_{p,K}(0)=\int_{0}^{0} \bm{e}_1^\prime \bm{A}_{p,K}^{-1} \bm{r}_p\left( z\right) K\left( z\right)=0$.
    We show the second statement below.
    Notice that $\mathscr{K}_{p,K}(z) \coloneqq \bm{e}_1^\prime \bm{A}_{p,K}^{-1} \bm{r}_p\left( z\right) K\left( z\right)$ is the usual equivalent kernel of local polynomial fitting for estimating the first derivative of regression function (\citealp[p.70]{Fan_Gijbels:1996}; $K^*_{\nu,c}$ with $\nu=1,c=0$ in their notation).
    Therefore, $\mathscr{K}_{p,K}$ satisfies
    \begin{align}
        \int_{-\infty}^{0} z^j \mathscr{K}_{p,K}(z) \,dz = \delta_{1,j}\,\,(0\leq j \leq p),\label{moments of equiv kernel in LPR}
    \end{align}
    where $\delta_{1,j}$ takes $1$ if $j=1$ and $0$ otherwise.
    From the standard property of the integration and \eqref{moments of equiv kernel in LPR} with $j=0$,
     \begin{align*}
            \int_{-\infty}^{0} \mathscr{K}_{p,K}(z)
            \mathbf{1}\left\{u \leq z\right\}
            \,dz 
            &= 
            \int_{-\infty}^{0} \mathscr{K}_{p,K}(z)\,dz - \int_{-\infty}^{u} \mathscr{K}_{p,K}(z)\,dz \\
            &= 
            - \int_{-\infty}^{0} \mathscr{K}_{p,K}(z) \mathbf{1}\left\{z \leq u\right\}\,dz.
    \end{align*}
    Thus, we have
    \begin{align}
        \int_{-\infty}^{0} u^j K^*_{p,K}(u) \,du
            &= 
            \int_{-\infty}^{0} u^j \left\{\int_{-\infty}^{0} \mathscr{K}_{p,K}(z)
            \mathbf{1}\left\{u \leq z\right\}
            \,dz \right\}\,du \nonumber\\
            & = \int_{-\infty}^{0} u^j \left\{- \int_{-\infty}^{0} \mathscr{K}_{p,K}(z)
            \mathbf{1}\left\{u \geq z\right\}
            \,dz \right\}\,du. \label{eq:exchange}
    \end{align}
    In addition, by a simple evaluation, we can see that
    \begin{align*}
        \int_{-\infty}^{0}  \int_{-\infty}^{0} \bigg|u^j \mathscr{K}_{p,K}(z)
            \mathbf{1}\left\{u \geq z\right\}\bigg|
            \,du \,dz \le  \frac{1}{j+1}\int_{-\infty}^{0} \sum_{k=0}^{p} |a_k| |z|^{k+j+1} K\left( z\right) \,dz,
    \end{align*}
    where $a_k$'s are some constants.
    Together with the assumption $\int_0^{\infty} u^{2p}K(u)\,du<\infty$, the order of the integration in \eqref{eq:exchange} is exchangeable so that we have
    \begin{align*}
        \int_{-\infty}^{0} u^j K^*_{p,K}(u) \,du
            &= 
            -\int_{-\infty}^{0} \mathscr{K}_{p,K}(z) \int_{z}^{0} 
            u^j
            \,du \,dz\\
            &=
            \frac{1}{j+1}\int_{-\infty}^{0} \mathscr{K}_{p,K}(z) z^{j+1} \,dz = 
            \frac{\delta_{0,j}}{j+1}=\delta_{0,j}\,\,(0\leq j \leq p-1).
    \end{align*}
    This proves the moment conditions.
\end{proof}

\subsection{Infeasible Optimal Weighting Scheme}\label{subsec: infeasible optimal}
In this section, we derive the infeasible optimal weight.

\subsubsection{Some Preliminaries}
Throughout this section, we assume the following condition:
\begin{assumption}
    Kernel function $K$ is of second-order, nonnegative, symmetric, absolutely continuous on $(-\infty,0]$, $K(0)>0$, $\int_{0}^{\infty} u^{2p} K(u) \,du < \infty$, and
    \begin{align}
        \sup_{u\leq0} (1+|u|^p) K(u) \leq C_0,\quad \text{ and }\,
        \esssup_{u\leq0} (1+|u|^{p+1}) |K^\prime(u)| \leq C_1,\label{eq:cond ker}
    \end{align}
    where $C_0,C_1$ are some positive constants and $K^\prime$ is the derivative of $K$ (which exists almost everywhere).
    The class of kernel functions satisfying these conditions will be denoted by $\mathcal{K}$.
\end{assumption}

All the conditions are fairly standard.
In particular, compactly supported $C^1$ kernel functions satisfy \eqref{eq:cond ker}.
The Gaussian or Laplace kernels also meet the assumption.

We will see some properties of $K_{p,K}^*$ when $K\in\mathcal{K}$.
We first show $\mathscr{K}_{p,K}(z) >0$ holds near $z=0$.
Here, write $\bm{A}_{p,K} = \mathrm{diag}(1,-1,\ldots)\bm{D}_{p,K}\mathrm{diag}(1,-1,\ldots)$, where $\bm{D}_{p,K}\coloneqq\int_{0}^{\infty} \bm{r}_p(u)\bm{r}_p(u)^\prime K(u)\, du$. 
Then, \citet[Theorem 4.4]{Pinkus:2009} shows that $\bm{D}_{p,K}$ is strictly totally positive, and hence $\bm{A}_{p,K}^{-1}$ is also totally positive (\citealp[Proposition 1.6]{Pinkus:2009}), implying that all elements of $\bm{A}_{p,K}^{-1}$ is positive.
Hence, $\bm{e}_1^\prime \bm{A}_{p,K}^{-1} \bm{r}_p$ is positively valued over $z\in[-\varepsilon,0]$ for sufficiently small $\varepsilon>0$. Combined with the assumption $K\geq0$ and $K(0)>0$, $\mathscr{K}_{p,K}(z)$ is positively valued just below the origin.
Note that this implies that $K_{p, K}^*$ is also positive near the origin, since $K_{p, K}^*(u) = \int_u^0 \mathscr{K}_{p, K}(z)\,dz$ by definition.

Next, we discuss the sign change property of the equivalent kernel.
The moment condition in \eqref{moments of equiv kernel of LPD} implies $K_{p, K}^*$ changes its sign at least $p-1$ times. At the same time, $\mathscr{K}_{p, K}$ changes its sign up to $p$ times. Combined with the fact that $\int_{-\infty}^{0}\mathscr{K}_{p, K}=0$ (see \eqref{moments of equiv kernel in LPR}), we can see that $K_{p, K}^*(u) = \int_u^0 \mathscr{K}_{p, K}(z)\,dz$ changes its sign up to $p-1$ times. To sum up, $K_{p, K}^*$ changes its sign $p-1$ times.
Then, we can take zeros $u_j,j=1,\ldots,p-1$ and define $q(u) \coloneqq \prod_{j=1}^{p-1}(u - u_j)$.

With this $q$, we define $\tilde{K}(u) \coloneqq {K_{p, K}^*(u)}/(C_{p,K} \cdot q(u))$, where $C_{p,K}$ is a normalizing constant so that $\int_{-\infty}^{0}\tilde{K}=1$.
Note that the sign of $K_{p, K}^*(u)$ and $q(u)$ coinsides, since $K_{p, K}^*(u)$ changes its sign exactly $p-1$ times and $K_{p, K}^*(u)>0$ near the origin.
Hence, $\tilde{K}(u)\geq0$. 
Further, $\tilde{K}(u)$ is Lipschitz continuous.
To see this, note that $u^j K(u)$ is bounded and Lipschitz for $j=0,\ldots,p$. Then, $\mathscr{K}_{p, K}$ is also bounded and Lipschitz, implying that so is $K_{p,K}^*$.
Define 
\begin{align*}
    k_j(u)\coloneqq \begin{cases}
        \frac{K_{p,K}^*(u)}{u-u_j} & \text{ if } u\neq u_j,\\
        -\mathscr{K}_{p, K}(u_j) & \text{ if } u= u_j,
    \end{cases}
\end{align*}
which is bounded and Lipschitz by noting that 
\begin{align*}
    k_j(u) = \frac{1}{u - u_j} \left(-\int_{u_j}^u \mathscr{K}_{p, K}(z)\,dz\right)
    = - \int_0^1 \mathscr{K}_{p, K}(u_j+t(u-u_j))\,dt
\end{align*}
and that $|k_j(u)-k_j(v)|\lesssim \int_0^1 t|u-v|\,dt = |u-v|/2$.
Then, in some neighborhood $[u_j-\delta_j, u_j+\delta_j], \exists\delta_j>0$, $\tilde{K}(u) = {K_{p, K}^*(u)}/(C_{p,K} \cdot q(u)) = k_j(u)/(C_{p,K} \cdot q_{-j}(u))$, where $q(u) = (u-u_j)q_{-j}(u)$, and this is (locally) Lipschitz continuous.
Based on this, the global Lipschitz continuity follows easily.

Now, rewrite $q(u) = \bm{\beta}^\top \bm{r}_{p-1}(u)$.
Then, the moment condition can be rewritten as 
\begin{align*}
    (1,0,\ldots,0) &= C_{p,K}\bm{\beta}^\top \int_{-\infty}^0 \bm{r}_{p-1}(u) \bm{r}_{p-1}(u)^\top \tilde{K}(u)\,du
    = C_{p,K}\bm{\beta}^\top \bm{A}_{p-1, \tilde{K}},
\end{align*}
that is $\bm{\beta}^\top =(1,0,\ldots,0)\bm{A}_{p-1, \tilde{K}}^{-1}/{C_{p,K}}$
Hence, we have that
\begin{align*}
    K_{p, K}^*(u) &= {C_{p,K}} \tilde{K}(u)q(u) = {C_{p,K}} \tilde{K}(u) \bm{\beta}^\top \bm{r}_{p-1}(u)\\
    &= (1,0,\ldots,0)\bm{A}_{p-1, \tilde{K}}^{-1}\bm{r}_{p-1}(u)\tilde{K}(u)\\
    &= \mathscr{K}_{p-1, \tilde{K}}(z).
\end{align*}

\subsubsection{Optimal Weight}
We are now interested in the minimization of the asymptotic MSE.
By the standard local polynomial regression theory, we shall consider
\begin{align*}
    \min_{K\in\mathcal{K}}\left|\int_{-\infty}^0 u^{p} K_{p, K}^*(u)\,du\right|\left\{\int_{-\infty}^0 K_{p, K}^*(u)^2\,du\right\}^p.
\end{align*}
We have seen in the previous section that the following problem is more general than this problem:
\begin{align*}
    \min_{\bar{K}:\bar{K}\geq0\text{ and Lipschitz}}\left|\int_{-\infty}^0 u^{p} \mathscr{K}_{p-1, \tilde{K}}(u)\,du\right|\left\{\int_{-\infty}^0 \mathscr{K}_{p-1, \tilde{K}}(u)^2\,du\right\}^p.
\end{align*}
This problem is solved by \citet[Theorem 2]{Cheng_etal:1997}, which shows that the equivalent kernel of the triangular kernel, say $\mathscr{K}_{p-1, T}$, is the solution.
Hence, this is the optimal weighting scheme within an enriched class of weight functions. However, it is not feasible in the LPD estimation since $\mathscr{K}_{p-1, T}(0)\neq0$.

\section{Equivalent Kernels}\label{sec: oa equiv}
We outline the derivation of the equivalent kernel.
For notational simplicity, we write 
\begin{align*}
    & u_{i,h} \coloneqq \frac{X_i-x}{h}, 
    \quad K_{i,h} \coloneqq K( u_{i,h} ), 
    \quad \mathbf{K}_h \coloneqq \mathrm{diag}[ K_{1,h}, \dots, K_{n,h} ],  \\ 
    &\mathbf{X}_h \coloneqq \left[ u_{i,h}^j \right]_{1\le i\le n,~ 0\le j \le p},
    \quad \bm{H} \coloneqq \mathrm{diag}[1, h, \dots, h^p].
\end{align*}
In addition, we define
\begin{align*}
    & \tilde{\bm{S}}_{(p,h,x)} \coloneqq \frac{1}{nh} \mathbf{X}_h^\prime \mathbf{K}_h \mathbf{X}_h = \frac{1}{nh} \sum_{i=1}^n \bm{r}_p(u_{i,h})\bm{r}_p(u_{i,h})^\prime K_{i,h}, \\
    & \bm{S}_{(p,x)} \coloneqq \int_{-\infty}^{(x_R-x)/h} \bm{r}_p(u)\bm{r}_p(u)^\prime K(u)\,du.
\end{align*}
For the analysis of the boundary point $x_R$, we introduce
\begin{align*}
    \bm{A}_{p,K} &= \int_{-\infty}^{0} \bm{r}_p(u)\bm{r}_p(u)^\prime K(u) \,du. 
\end{align*}
Note that $ \bm{S}_{(p,x_R)} = \bm{A}_{p,K}$.
Finally, for the finite sample analysis below, we define
\begin{align*}
    & \bm{\hat{F}} \coloneqq (\hat{F}(X_1), \dots, \hat{F}(X_n))^\prime, \\
    & \bm{\tilde{\Omega}}_{(p,h,K)} \coloneqq
    \frac{1}{h} \mathbf{X}_h^\prime \mathbf{K}_h =
    \frac{1}{h} 
    \begin{bmatrix} K_{1,h}\bm{r}_p(u_{1,h})& \dots & K_{n,h}\bm{r}_p(u_{n,h}) \end{bmatrix}.
\end{align*}
Then, the LPD estimator has the following closed form:
\begin{align*}
    \hat{f}(x) = \bm{e}_1^\prime \bm{H}^{-1} \tilde{\bm{S}}_{(p,h,x)}^{-1} \frac{1}{nh} \sum_{i=1}^{n} \bm{r}_p\left( u_{i,h} \right) K\left( u_{i,h} \right) \hat{F}(X_i).
\end{align*}

In the following, we assume that (i) $h\to 0$ and $nh\to\infty$ as $n\to \infty$, (ii)$K(\cdot)$ is a non-negative, symmetric kernel function such that $\int K(u)du =1$, and (iii) density function $f$ satisfies standard regularity conditions.

From Lemma 1 of the Supplemental Appendix of \cite{Cattaneo_etal:2020} and the continuous mapping theorem, it follows that
\begin{align}
    \tilde{\bm{S}}_{(p,h,x)}^{-1} = \frac{1}{f(x)}\bm{S}_{(p,x)}^{-1} + o_p(1). \label{eq:inv_Gamma}
\end{align}
In addition, from simple algebra, we can see that
\begin{align}
    & \frac{1}{nh} \sum_{i=1}^{n} \bm{r}_p\left( u_{i,h} \right) K\left( u_{i,h} \right) \hat{F}(X_i) \nonumber \\
    & = \int_{-\infty}^{(x_R-x)/h} \bm{r}_p(u) K(u)\hat{F}(x+uh)f(x+uh)du \nonumber\\
    & \quad + \frac{1}{nh} \sum_{i=1}^n \bm{r}_p\left( u_{i,h} \right) K\left(u_{i,h}\right) \left\{\hat{F}(X_i) - F(X_i) \right\} \nonumber\\
    & \quad\quad - \int_{-\infty}^{(x_R-x)/h} \bm{r}_p(u) K(u)\left\{\hat{F}(x+uh)-F(x+uh)\right\}f(x+uh)du \nonumber\\
    & \quad + \frac{1}{nh} \sum_{i=1}^{n} \bm{r}_p\left(u_{i,h}\right) K\left(u_{i,h}\right) F(X_i) - \int_{-\infty}^{(x_R-x)/h} \bm{r}_p(u) K(u)F(x+uh)f(x+uh)du \nonumber\\
    & = \int_{-\infty}^{(x_R-x)/h} \bm{r}_p(u) K(u)\hat{F}(x+uh)f(x+uh)du  \nonumber\\
    & \quad + \frac{1}{n^2h} \sum_{i=1}^n \bm{r}_p\left(u_{i,h}\right) K\left(u_{i,h}\right) \left( 1 - F(X_i)\right) \nonumber\\
    & \quad + \frac{1}{n^2h} \sum_{i\neq j} \Bigg\{ \bm{r}_p\left(u_{i,h}\right)\Big(\mathbf{1}[X_j\le X_i] - F(X_i) \Big)K\left(u_{i,h}\right) \nonumber\\
    & \quad\quad - \mathbb{E}\left[ \bm{r}_p\left(u_{i,h}\right)\Big(\mathbf{1}[X_j\le X_i] - F(X_i) \Big)K\left(u_{i,h}\right) \mid X_j \right] \Bigg\} \nonumber\\
    & \quad + \frac{1}{nh} \sum_{i=1}^{n} \bm{r}_p\left(u_{i,h}\right) K\left(u_{i,h}\right) F(X_i) - \int_{-\infty}^{(x_R-x)/h} \bm{r}_p(u) K(u)F(x+uh)f(x+uh)du 
    \label{eq:lpd_expansion} \\
    & = \int_{-\infty}^{(x_R-x)/h} \bm{r}_p(u) K(u)\hat{F}(x+uh)f(x+uh)du + \hat{\mathbf{B}}_{\mathtt{LI}} + \hat{\mathbf{R}} + \hat{\mathbf{W}}, \nonumber
\end{align}
where $\hat{\mathbf{B}}_{\mathtt{LI}}$, $\hat{\mathbf{R}}$ and $\hat{\mathbf{W}}$ are the second, the third and the final terms in \eqref{eq:lpd_expansion} respectively.
Lemma 2 and 4 of Supplemental Appendix of \cite{Cattaneo_etal:2020} state that $\hat{\mathbf{B}}_{\mathtt{LI}} = o_p(1)$ and $\hat{\mathbf{R}} = o_p(1)$. 
The convergence of $\hat{\mathbf{W}}$ is obvious from the law of large numbers.
So it holds that
\begin{align}
     & \frac{1}{n} \sum_{i=1}^{n} \bm{r}_p\left(u_{i,h}\right) K\left(u_{i,h}\right) \hat{F}(X_i) \nonumber\\
     & = \int_{-\infty}^{(x_R-x)/h} \bm{r}_p(u) K(u)\hat{F}(x+uh)f(x+uh)du + o_p(1). \label{eq:nu_leading}
\end{align}
By \eqref{eq:inv_Gamma}, \eqref{eq:nu_leading}, and $\bm{e}_1^\prime\bm{H}^{-1} = \frac{1}{h}\bm{e}_1^\prime$, we have
\begin{align*}
    \hat{f}(x)
    &=
    \frac{1}{h}\bm{e}_1^\prime \left\{\frac{1}{f(x)} \bm{S}_{(p,x)}^{-1}\right\}
    \int_{-\infty}^{(x_R-x)/h}   r_p\left( u\right) K\left( u\right)\hat{F}(x+uh)f(x+uh) du + o_p(1)\\
    &=
    \frac{1}{nh}\sum_{i=1}^{n} 
    \int_{-\infty}^{(x_R-x)/h} \bm{e}_1^\prime \bm{S}_{(p,x)}^{-1} \bm{r}_p\left( u\right) K\left( u\right)
    \mathbf{1}\left\{u_{i,h} \leq u\right\}
    du + o_p(1).
\end{align*}
That is, the equivalent kernel $K^*$ is given by
\begin{align*}
    K^*(v) = 
    \int_{-\infty}^{(x_R-x)/h} \bm{e}_1^\prime \bm{S}_{(p,x)}^{-1} \bm{r}_p\left( u\right) K\left( u\right)
    \mathbf{1}\left\{v \leq u\right\}
    du.
\end{align*}

\section{Asymptotic Efficiency at Interior Points}\label{sec: oa int}
We compute the asymptotic efficiency at interior points for $p=2$.
Table \ref{tab: kernel asy efficiency int} summarizes the result.
\begin{table}[t]
    \begin{center}
    \scalebox{1}{
    \begin{tabular}{l|ccc}
    \hline\hline
        Kernel Function & $\mathcal{V}_{2,K}$ & $\mathcal{Q}_{2,K}$ & $\Theta_{K}$\\ \hline
        \multirow{2}*{Uniform} & 0.600 & 0.542 & 1.129\\ 
         & (3.84) & (0.89) & (1.25)\\ \hline
        \multirow{2}*{Epanechnikov} & 0.714 & 0.544 & 1.073\\ 
         & (4.57) & (0.89) & (1.18)\\ \hline
         \multirow{2}*{Biweight} & 0.816 & 0.548 & 1.044\\ 
         & (5.22) & (0.89) & (1.15)\\ \hline
         \multirow{2}*{Gaussian} &  0.282 & 0.564 & 0.951\\
         & (1.81) & (0.92) & (1.05)\\ \hline
         \multirow{2}*{Triangular} & 0.743 & 0.546 & 1.068\\ 
         & (4.75) & (0.89) & (1.18)\\ \hline
         \multirow{2}*{$2$-Triangular} & 0.893 & 0.553 & 1.034\\ 
         & (5.71) & (0.90) & (1.14)\\ \hline
         \multirow{2}*{$3$-Triangular} & 1.045 &0.560 & 1.012\\ 
         & (6.69) & (0.91) & (1.12)\\ \hline
        \multirow{2}*{Laplace} & 0.156 & 0.612 & 0.906\\ 
        & (1.00) & (1.00) & (1.00)\\ \hline
    \end{tabular}}
    \caption{Asymptotic Variance and Efficiency Relative to the Laplace Kernel (Interior Case)}
    \label{tab: kernel asy efficiency int}
    \end{center}
\end{table}
At interior points, the Laplace kernel performs worst in terms of MSE, while the uniform kernel delivers the best performance.
This is consistent with \citet{cattaneo2021local}, who show that the equivalent kernel of the LPD estimator with the uniform kernel is the Epanechnikov kernel, i.e., the MSE-optimal kernel.
Among the most commonly used kernels, however, MSE is not very sensitive to kernel choice at interior points, echoing the usual observation for standard KDE.

In inference, by contrast, the Laplace kernel remains the best among the eight kernels, even at interior points.
Thus, for statistical inference, the Laplace kernel is a good choice regardless of the evaluation point, although its numerical advantage is less pronounced than in the boundary case.

Figure \ref{fig: variability interior} shows how the asymptotic variance increases with the polynomial order.
As in standard local polynomial regression (\citealp[p.~79]{Fan_Gijbels:1996}), the variance remains unchanged when moving from odd to even order polynomials.
This observation justifies the common practice of choosing $p=2$.
\begin{figure}[th]
\begin{center}
\begin{tabular}{cc}
      \begin{minipage}[t]{0.45\hsize}
        \centering
        \includegraphics[keepaspectratio, scale=0.45]{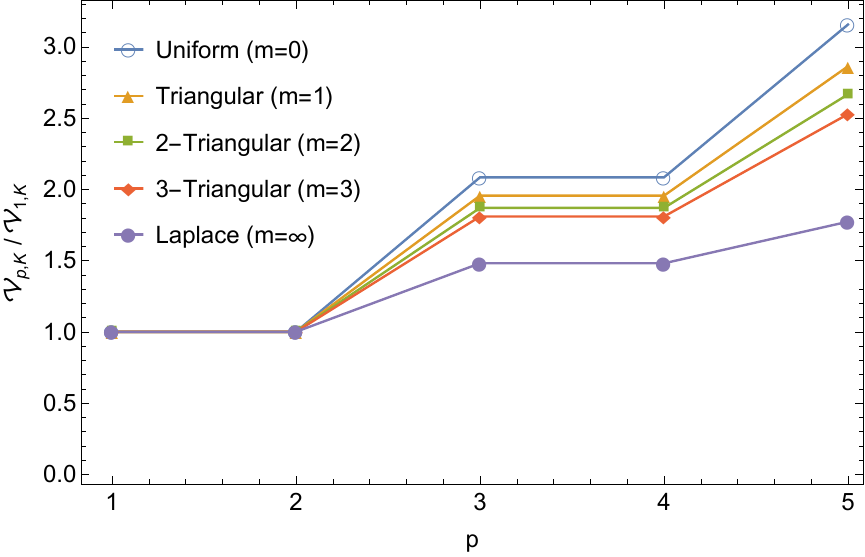}
      \end{minipage} &
      \begin{minipage}[t]{0.45\hsize}
        \centering
        \includegraphics[keepaspectratio, scale=0.45]{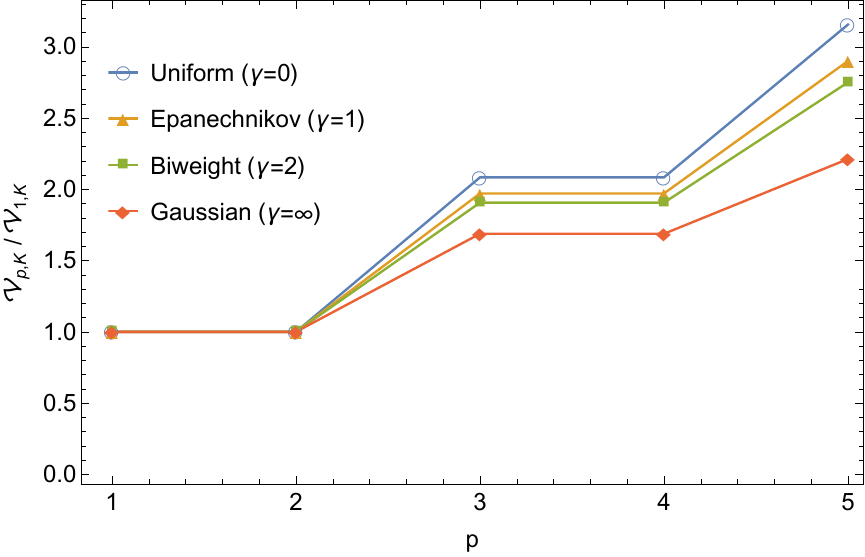}
      \end{minipage} 
    \end{tabular}
     \caption{Increase of Variability at Interior Points}
     \label{fig: variability interior}
\end{center}
\end{figure}

\section{Additinal Numerical Results}\label{sec: oa fig}
\subsection{Small Sample Performance}
Figure \ref{fig: finite mse} reports the MSE under the setup discussed in Section \ref{subsec: simu small} in the main text.
\begin{figure}[H]
    \begin{center}
        \includegraphics[width=0.5\linewidth]{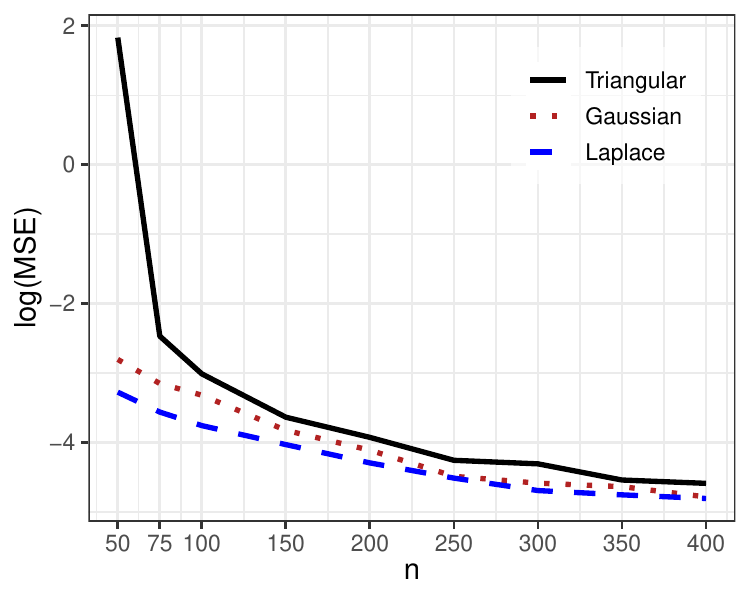}
    \caption{Small-Sample MSE}
     \label{fig: finite mse}
    \end{center}
\end{figure}

\subsection{Increase of Variability}
Figure \ref{fig: variability2} is a counterpart of Figure \ref{fig: variability} for the Beta family kernels.
\begin{figure}[H]
    \begin{center}
        \includegraphics[width=0.6\linewidth]{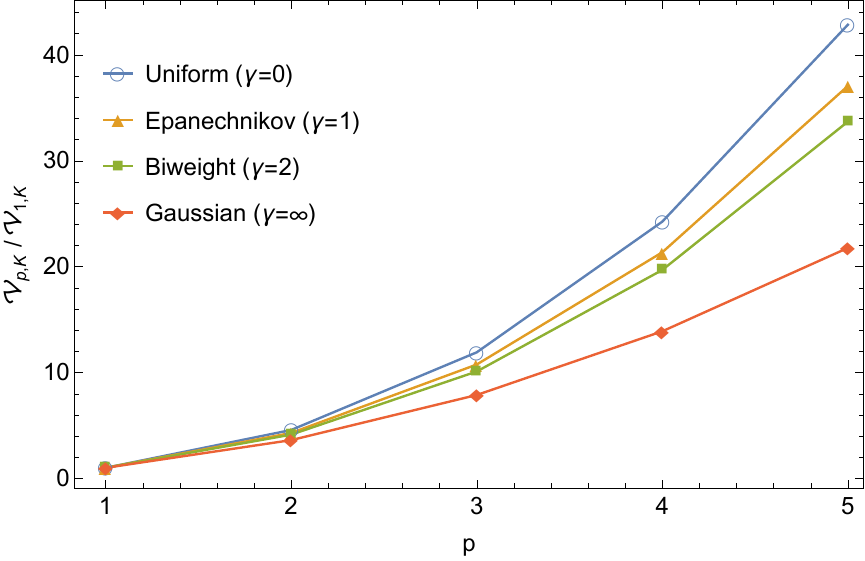}
    \caption{Increase of Variability}
     \label{fig: variability2}
    \end{center}
\end{figure}

\end{document}